\newtheorem{thm}{Theorem}
\newtheorem{lem}{Lemma}
\begin{document}

\title{A quantum walk on the half line with a particular initial state}
\subtitle{}

\titlerunning{A quantum walk on the half line with a particular initial state}

\author{Takuya Machida}

\authorrunning{T.~Machida}

\institute{%
T.~Machida \at
              College of Industrial Technology, Nihon University, Narashino, Chiba 275-8576, Japan\\
              \email{machida.takuya@nihon-u.ac.jp}
}

\date{}

\maketitle

\begin{abstract}
Quantum walks are considered to be quantum counterparts of random walks.
They show us impressive probability distributions which are different from those of random walks.
That fact has been precisely proved in terms of mathematics and some of the results were reported as limit theorems.
When we analyze quantum walks, some conventional methods are used for the computations.
Especially, the Fourier analysis has played a role to do that.
It is, however, compatible with some types of quantum walks (e.g. quantum walks on the line with a spatially homogeneous dynamics) and can not well work on the derivation of limit theorems for all the quantum walks.
In this paper we try to obtain a limit theorem for a quantum walk on the half line by the usage of the Fourier analysis.
Substituting a quantum walk on the line for it, we will lead to a possibility that the Fourier analysis is useful to compute a limit distribution of the quantum walk on the half line.
\keywords{Quantum walk \and Half line \and Limit distribution}
\end{abstract}

\section{Introduction}
As a unitary process, quantum walks are expected to describe quantum systems and to be useful to aspect the behavior of such systems.
Quantum walks are also called quantum random walks, but their behavior is largely different from that of random walks.
While probability distributions of random walks are generally diffusive and expressed by Gaussian distributions in approximation, those of quantum walks show ballistic actions and their asymptotic behavior is far from Gaussian distributions.
That fact has been proved in mathematics and has resulted in limit theorems~\cite{Venegas-Andraca2012}.
One of the ways to derive the limit theorems is the Fourier analysis.
It, however, is not that the method has advantage for all the quantum walks.
The Fourier analysis well works on analysis of the quantum walks whose dynamics are spatially homogeneous.
For instance, some researchers succeeded in deriving long-time limit distributions for quantum walks on the line or on the plane.
On the other hand, for spatially inhomogeneous quantum walks, limit theorems are hard to get by Fourier analysis due to the inhomogeneity of the corresponding time evolution protocol in Fourier picture.
In this paper we study a quantum walk on the half line whose dynamics is not spatially homogeneous, and challenge to derive a limit theorem (Theorem \ref{th:151212} in page \pageref{th:151212}) for it by Fourier analysis.
Under a particular initial state, we can make a copy of the quantum walk on the half line to a quantum walk on the line.
The study of an alternate system of a quantum walk was also reported for a walk on the plane \cite{DiMcBusch2011}.
In the study, it was proved that there existed a two-state quantum walk on the plane whose probability distribution was same as that of a four-state quantum walk on the plane.
Its result was expanded later and a limit distribution of the alternate two-state quantum walk was demonstrated \cite{DiMcMachidaBusch2011}.

Back to the quantum walk on the half line, using the substitutional system, we will get an exact representation for the probability distribution of the walk and be accessible to its limit distribution by Fourier analysis.
A representation for the probability distribution was exactly estimated just for a quantum walk on the line \cite{Konno2002a} in which its limit distribution was also proved, and the exact representation for the quantum walk on the half line has never been seen before.
Although the limit distribution for the quantum walk on the half line was already obtained in a manner different from the Fourier analysis \cite{LiuPetulante2013}, we will demonstrate the derivation by Fourier analysis.
The paper \cite{LiuPetulante2013} took care of a quantum walk which contained the quantum walk defined in this paper.
From the result in the past study, we can tell that there is a class of quantum walk on the half line whose probability distribution has the possibility of localization.
But, our quantum walk is out of the class of localization, that is, we will not see localization at all in the rest of this paper.
We also observe localization for quantum walks on the half line in Refs. \cite{KonnoSegawa2011,KonnoSegawa2014} in which the analysis was based on the CGMV method \cite{CanteroMoralGrunbaumVelazquez2010}.
The CGMV method is known as a way to mention the possibility of localization of quantum walks.  

In the subsequent section, we define a quantum walk on the half line which we should study in this paper.
Then, a substitutional quantum walk on the line is introduced and we see two results for it in Sec. \ref{sec:L_QW}.
One is an exact representation for the probability distribution of the walk, and the other is a limit distribution.
Since both can be derived by the results of past studies, they are briefly shown without any proof.
Using the results for the quantum walk on the line, we can give an exact representation for the probability distribution of the quantum walk on the half line and a limit distribution for it.
In the final section we conclude this paper and make a discussion.

\section{Quantum walk on the half line with a localized initial state}
\label{sec:HL_QW}

We first define a quantum walk on the half line.
With two inner states, the quantum walker moves on the half line.
Its total system is described as the superposition of the position states and the inner states, and the system spreads out across space as a wave.
We present the system of the walk in a tensor Hilbert space.
The position states are given in a Hilbert space $\mathcal{H}_p^{HL}$ spanned by the orthonormal basis $\left\{\ket{x} : x\in\left\{0,1,2,\ldots\right\}\right\}$.
On the other hand, the inner states are expressed in a Hilbert space $\mathcal{H}_c$ spanned by the orthonormal basis $\left\{\ket{0}, \ket{1}\right\}$.
Let $\ket{\Psi_t}\in\mathcal{H}_p^{HL}\otimes\mathcal{H}_c$ be the system of quantum walk on the half line at time $t\,\in\left\{0,1,2,\ldots\right\}$.
It is updating in a unitary process featured by a unitary operator which is casted to the elements in the Hilbert space $\mathcal{H}_c$,
\begin{align}
 C=&\cos\theta\ket{0}\bra{0}+\sin\theta\ket{0}\bra{1}+\sin\theta\ket{1}\bra{0}-\cos\theta\ket{1}\bra{1}\quad (\theta\in [0,2\pi))\nonumber\\
 =&c\ket{0}\bra{0}+s\ket{0}\bra{1}+s\ket{1}\bra{0}-c\ket{1}\bra{1}.\label{eq:coin-flip_operator}
\end{align}
In Eq. \eqref{eq:coin-flip_operator}, the letter $c$ (resp. $s$) is short for $\cos\theta$ (resp. $\sin\theta$).
Supposing that the quantum walk is in a localized superposition state at the initial time 0,
\begin{equation}
 \ket{\Psi_0}=\ket{0}\otimes e^{-i\theta}\left(\frac{1}{\sqrt{2}}\ket{0}+\frac{i}{\sqrt{2}}\ket{1}\right),\label{eq:HL_initial_state}
\end{equation}
it updates to the next state with a unitary evolution
\begin{equation}
 \ket{\Psi_{t+1}}=\tilde{S}^{HL}\tilde{C}^{HL}\ket{\Psi_t},\label{eq:HL_time-evolution}
\end{equation}
where
\begin{align}
 \tilde{C}^{HL}=&\sum_{x=0}^\infty\ket{x}\bra{x}\otimes C,\\
 \tilde{S}^{HL}=&\ket{0}\bra{0}\otimes\ket{1}\bra{0}+\sum_{x=1}^\infty\ket{x-1}\bra{x}\otimes\ket{0}\bra{0}+\sum_{x=0}^\infty\ket{x+1}\bra{x}\otimes\ket{1}\bra{1}.\label{eq:HL_shift_operator}
\end{align}
Due to the presence of the operator $\tilde{S}^{HL}$, the dynamics of the quantum walk is not spatially homogeneous,
\begin{align}
 \tilde{S}^{HL}(\ket{x}\otimes\ket{0})=&\left\{\begin{array}{ll}
				       \ket{x}\otimes\ket{1} & (x=0),\\
					      \ket{x-1}\otimes\ket{0} & (x=1,2,\ldots),
					     \end{array}\right.\\
 \tilde{S}^{HL}(\ket{x}\otimes\ket{1})=&\ket{x+1}\otimes\ket{1}\quad (x=0,1,2,\ldots).
\end{align}

The walker in the inner state $j\in\left\{0,1\right\}$ is observed at position $x\in\left\{0,1,2,\ldots\right\}$ at time $t$ with the probability
\begin{equation}
 \mathbb{P}(X_t^{HL}=x;j)=\bra{\Psi_t}\Bigl(\ket{x}\bra{x}\otimes\ket{j}\bra{j}\Bigr)\ket{\Psi_t},\label{eq:probability_inner_state}
\end{equation}
where $X_t^{HL}$ denotes the position of the quantum walker on the half line at time $t$.
In quantum walks, the probability distribution regardless of the inner states is normally analyzed rather than the probability in Eq. \eqref{eq:probability_inner_state},
\begin{equation}
 \mathbb{P}(X_t^{HL}=x)=\sum_{j=0}^1 \mathbb{P}(X_t^{HL}=x;j)=\bra{\Psi_t}\biggl(\ket{x}\bra{x}\otimes\sum_{j=0}^1\ket{j}\bra{j}\biggr)\ket{\Psi_t}.\label{eq:HL_probability}
\end{equation}
Note that the probability distribution outputs the same values of probability even when we set another localized initial state $\ket{\Psi_0}=\ket{0}\otimes \left(1/\sqrt{2}\ket{0}+i/\sqrt{2}\ket{1}\right)$, which has been often employed in the past studies on quantum walks.
That fact comes from the linearity of the updating rule in Eq. \eqref{eq:HL_time-evolution} and the definition of the probability distributions in Eq. \eqref{eq:probability_inner_state}.
As shown in Figs. \ref{fig:160201_01} and  \ref{fig:160201_04}, the probability distributions have a sharp peak and a ballistic behavior.
From Fig. \ref{fig:160201_07}, it is figured out how the sharp peak depends on the parameter $\theta$ which is embedded in the unitary operator $C$ in Eq. \eqref{eq:coin-flip_operator}.

\begin{figure}[h]
\begin{center}
 \begin{minipage}{35mm}
  \begin{center}
   \includegraphics[scale=0.3]{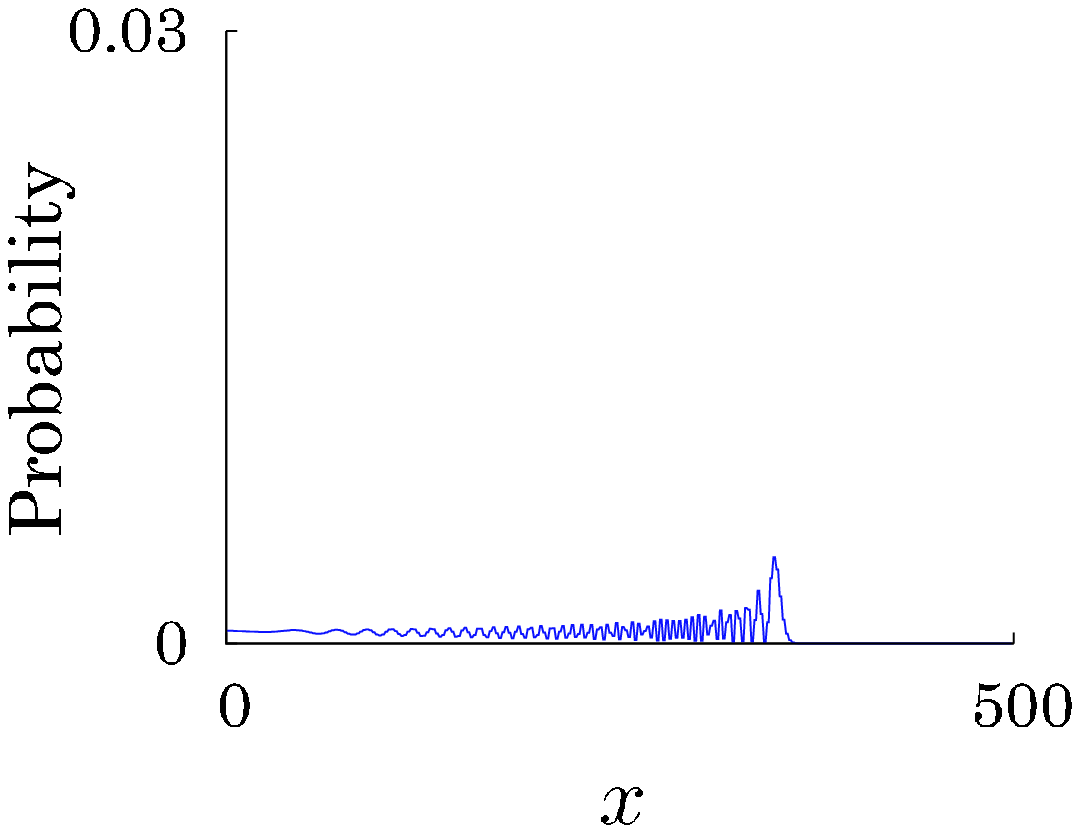}\\[2mm]
  (a) $\mathbb{P}(X_{500}^{HL}=x;0)$
  \end{center}
 \end{minipage}
 \begin{minipage}{35mm}
  \begin{center}
   \includegraphics[scale=0.3]{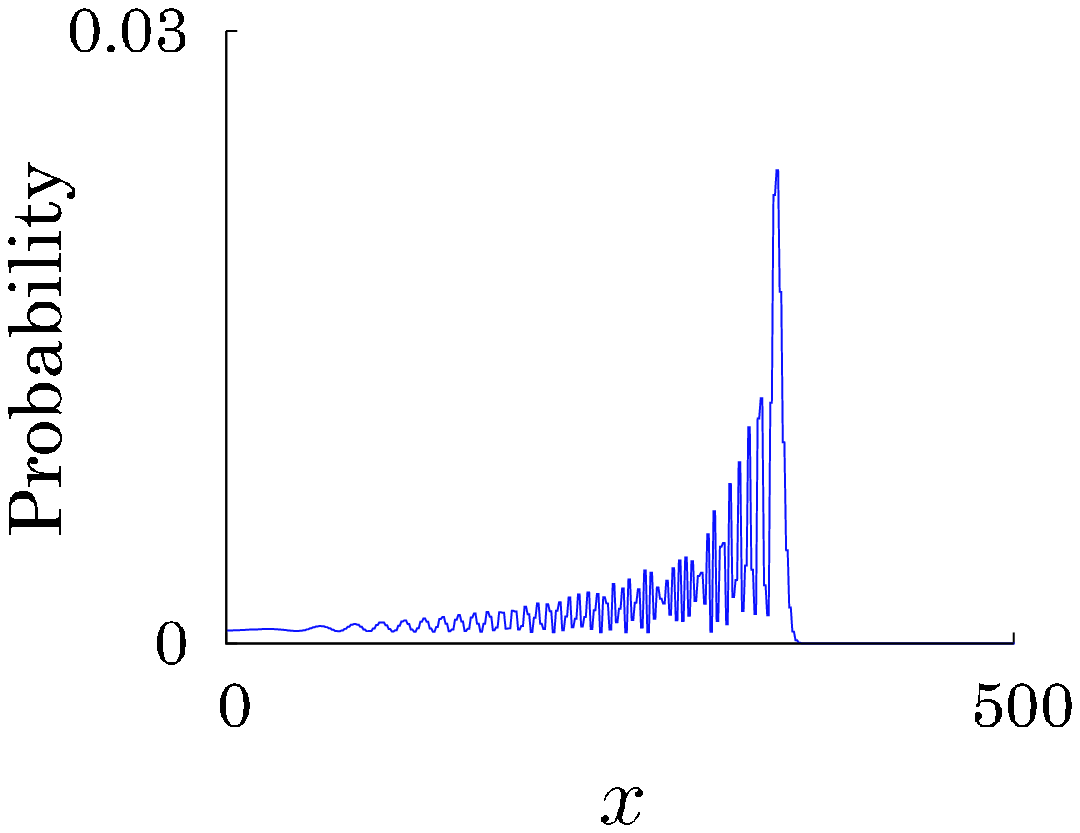}\\[2mm]
  (b) $\mathbb{P}(X_{500}^{HL}=x;1)$
  \end{center}
 \end{minipage}
 \begin{minipage}{35mm}
  \begin{center}
   \includegraphics[scale=0.3]{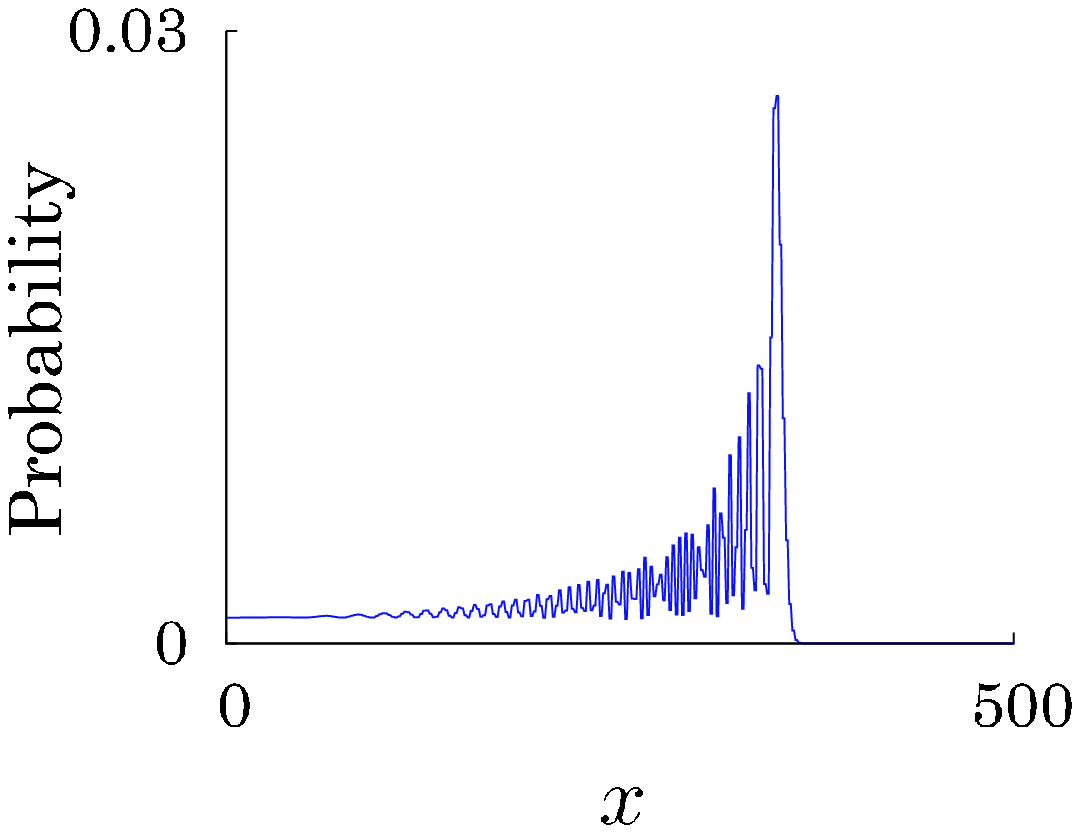}\\[2mm]
  (c) $\mathbb{P}(X_{500}^{HL}=x)$
  \end{center}
 \end{minipage}
\vspace{5mm}
\caption{$\theta=\pi/4$ : We observe a sharp peak in each distribution. Immediately after passing the peak toward the opposite way to the origin, the probability plummets.}
\label{fig:160201_01}
\end{center}
\end{figure}

\begin{figure}[h]
\begin{center}
 \begin{minipage}{35mm}
  \begin{center}
   \includegraphics[scale=0.15]{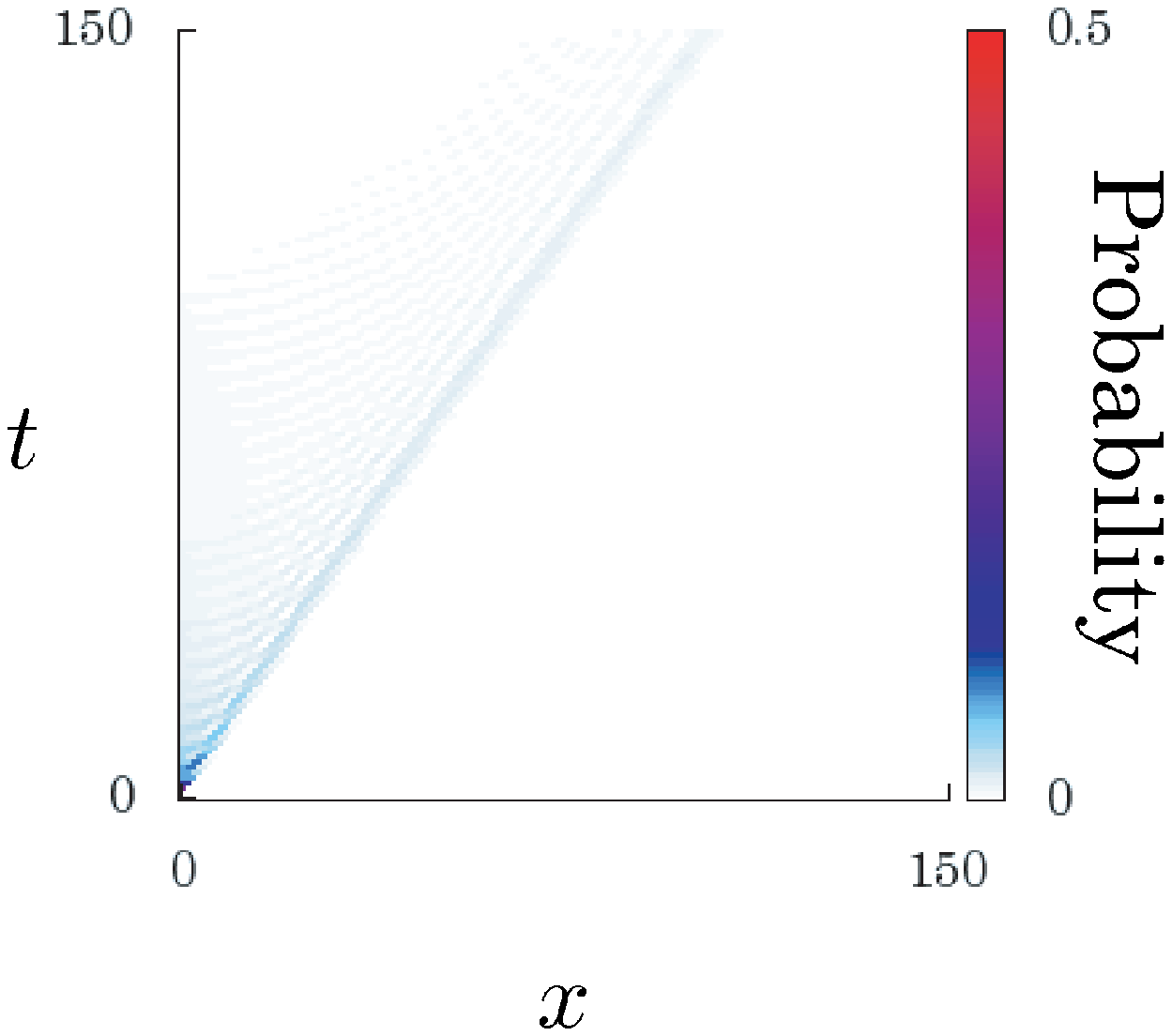}\\[2mm]
  (a) $\mathbb{P}(X_t^{HL}=x;0)$
  \end{center}
 \end{minipage}
 \begin{minipage}{35mm}
  \begin{center}
   \includegraphics[scale=0.15]{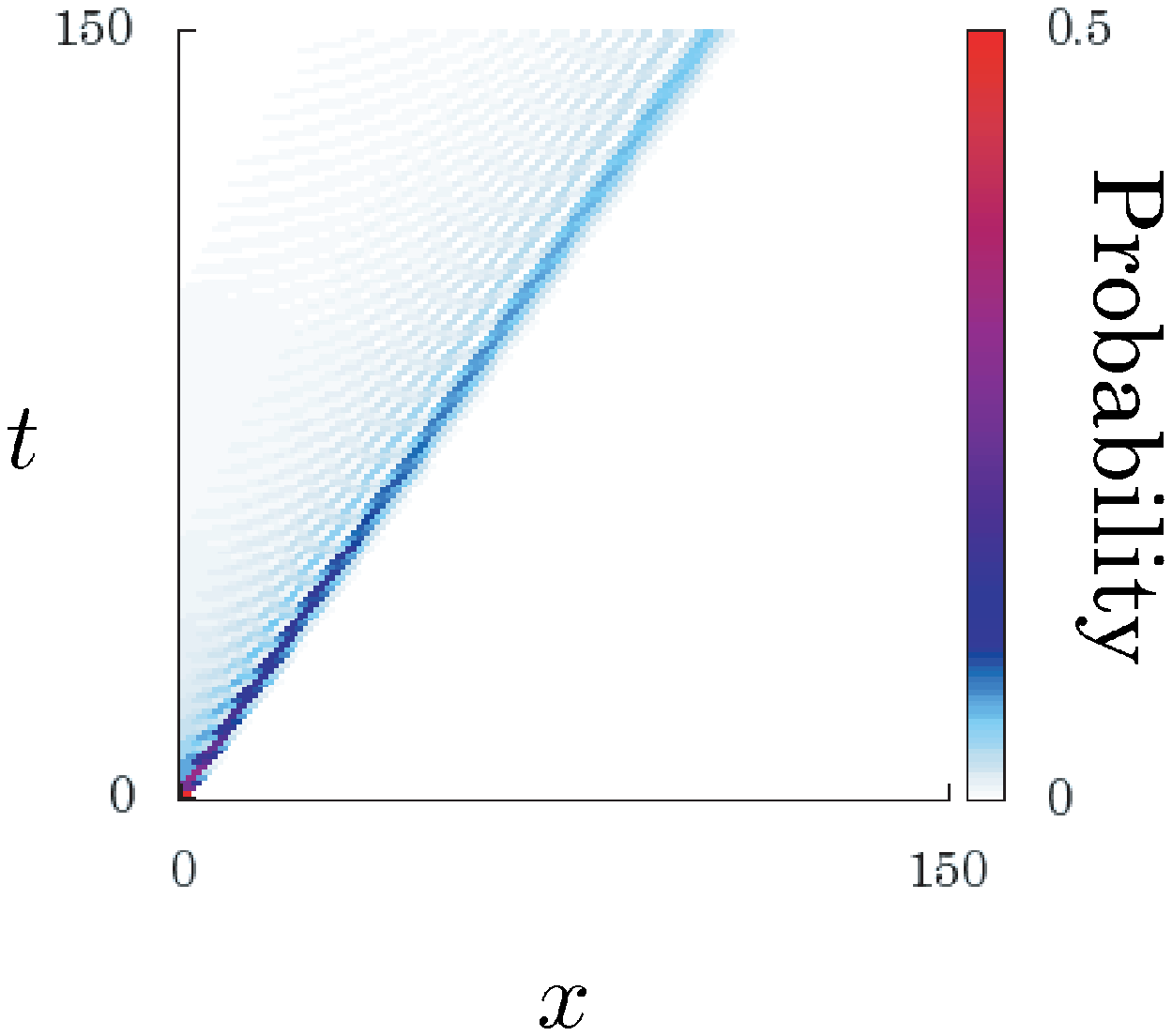}\\[2mm]
  (b) $\mathbb{P}(X_t^{HL}=x;1)$
  \end{center}
 \end{minipage}
 \begin{minipage}{35mm}
  \begin{center}
   \includegraphics[scale=0.15]{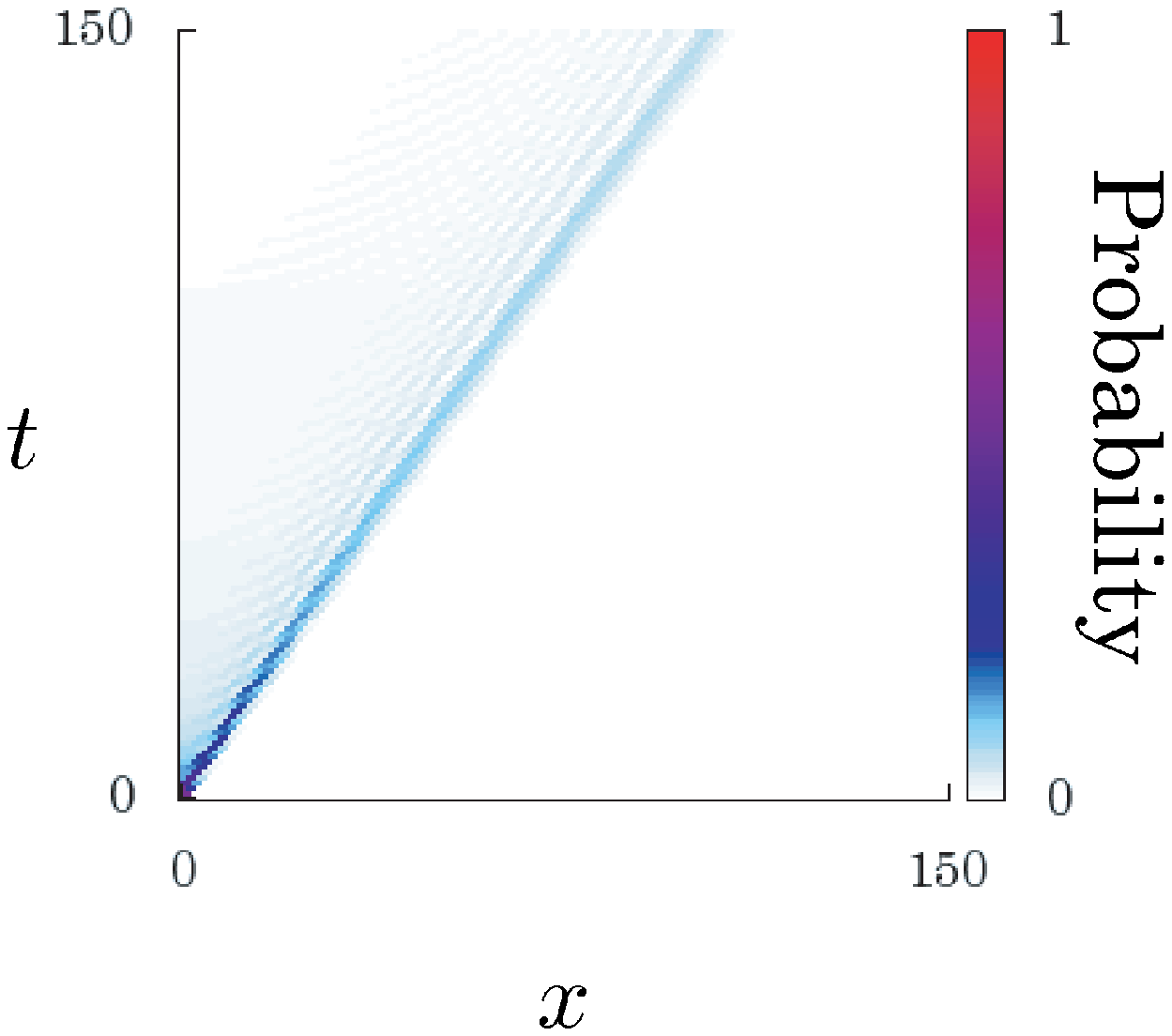}\\[2mm]
  (c) $\mathbb{P}(X_t^{HL}=x)$
  \end{center}
 \end{minipage}
\vspace{5mm}
\caption{$\theta=\pi/4$ : Each distribution is spreading out as time $t$ goes up, and its behavior is ballistic. While the maximum value of the probability is 0.5 in Figs. (a) and (b), it is 1 in Fig. (c).}
\label{fig:160201_04}
\end{center}
\end{figure}

\begin{figure}[h]
\begin{center}
 \begin{minipage}{35mm}
  \begin{center}
   \includegraphics[scale=0.15]{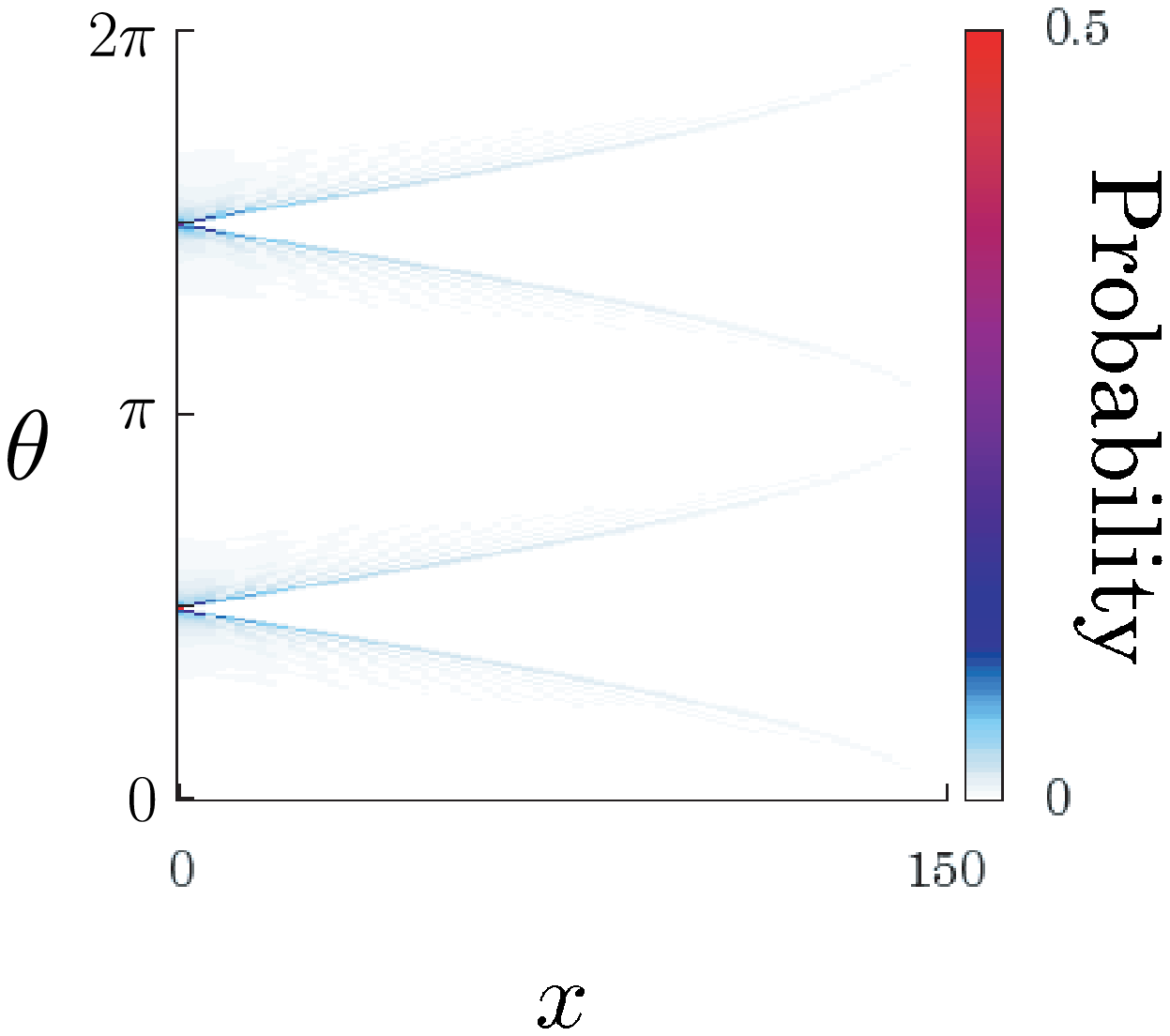}\\[2mm]
  (a) $\mathbb{P}(X_{150}^{HL}=x;0)$
  \end{center}
 \end{minipage}
 \begin{minipage}{35mm}
  \begin{center}
   \includegraphics[scale=0.15]{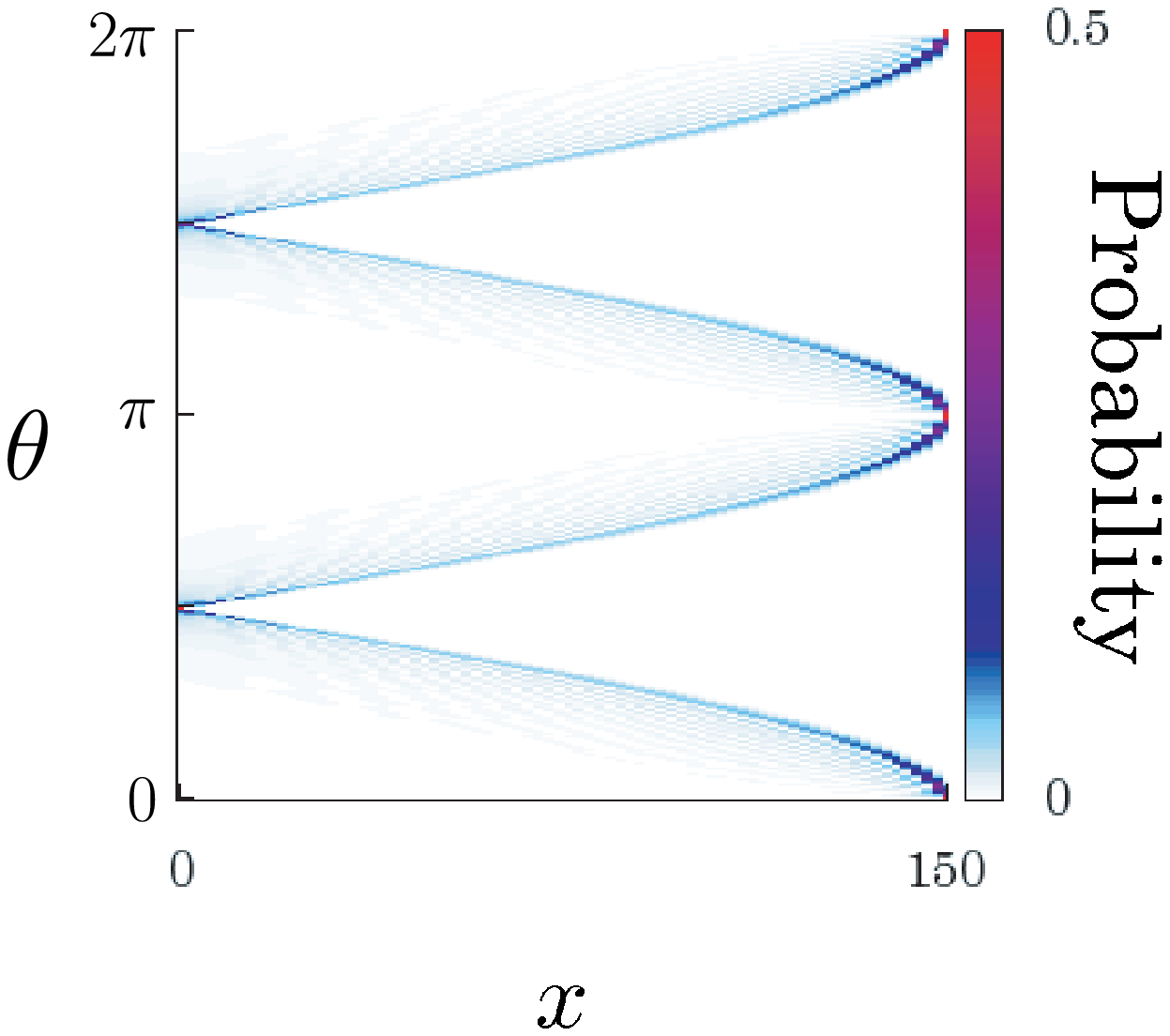}\\[2mm]
  (b) $\mathbb{P}(X_{150}^{HL}=x;1)$
  \end{center}
 \end{minipage}
 \begin{minipage}{35mm}
  \begin{center}
   \includegraphics[scale=0.15]{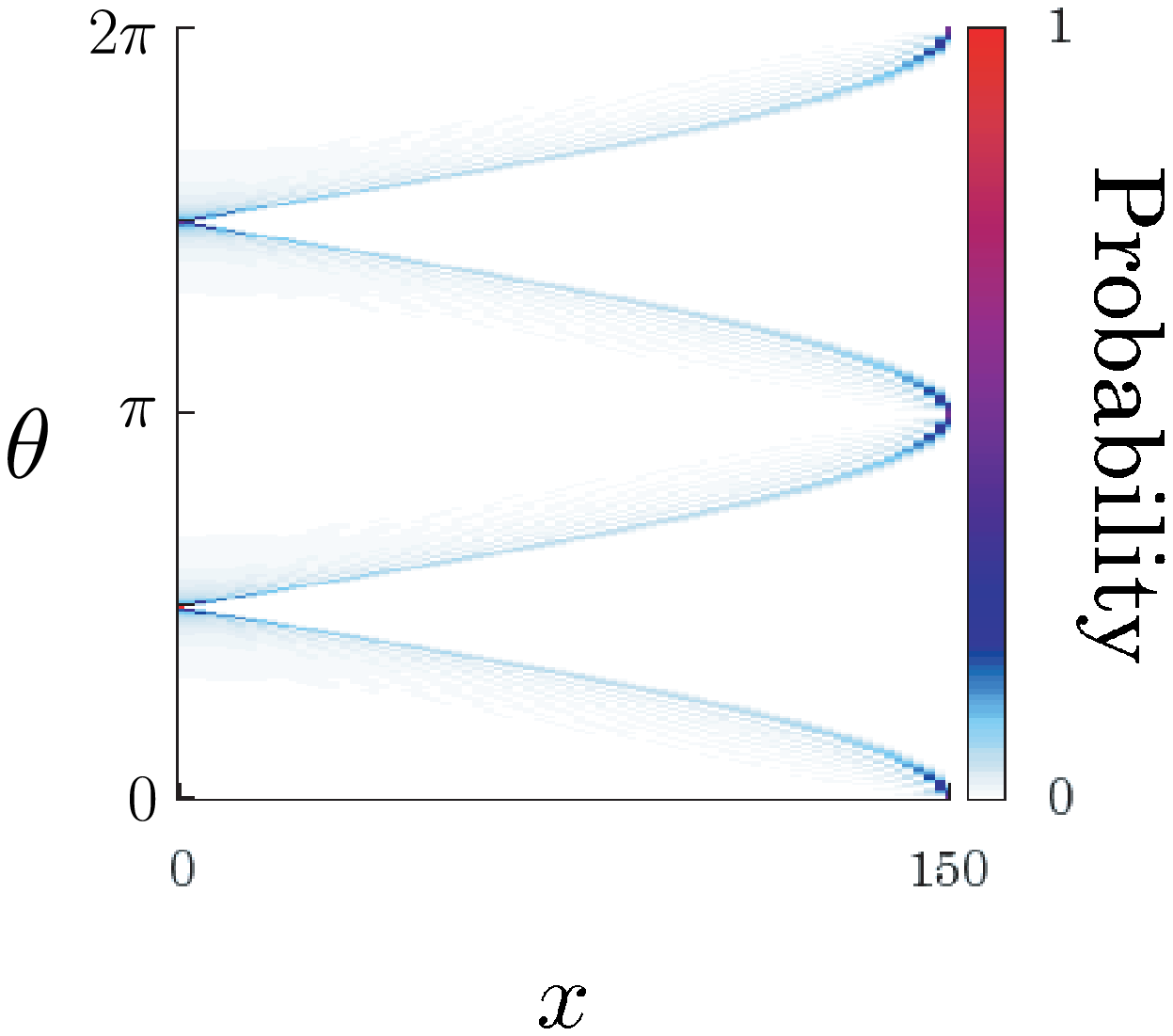}\\[2mm]
  (c) $\mathbb{P}(X_{150}^{HL}=x)$
  \end{center}
 \end{minipage}
\vspace{5mm}
\caption{These pictures tell us how the distributions depend on the parameter $\theta$. While the maximum value of the probability is 0.5 in Figs. (a) and (b), it is 1 in Fig. (c).}
\label{fig:160201_07}
\end{center}
\end{figure}

\section{Substitutional quantum walk on the line with a delocalized initial state}
\label{sec:L_QW}

In this section we introduce a standard quantum walk on the line.
Setting a delocalized initial state, we will tell that the quantum walk on the line makes a complete copy of the quantum walk on the half line which was defined in Sec. \ref{sec:HL_QW}.
The system of standard quantum walk is also described in a tensor Hilbert space $\mathcal{H}_p^{L}\otimes\mathcal{H}_c$, where the Hilbert space $\mathcal{H}_p^L$ is spanned by the orthonormal basis $\left\{\ket{x} : x\in\mathbb{Z}\right\}$.
The Hilbert space $\mathcal{H}_p^L$ represents the space where the quantum walker moves, which means that the walker is spreading out on the line $\mathbb{Z}=\left\{0,\pm 1,\pm 2,\ldots\right\}$ as time $t$ is going up.
The update rule of the walk, represented by $\ket{\Phi_t}\in\mathcal{H}_p^L\otimes\mathcal{H}_c$ at time $t$, is almost same as the one assigned to the quantum walk on the half line, 
\begin{equation}
 \ket{\Phi_{t+1}}=\tilde{S}^L\tilde{C}^L\ket{\Phi_t},\label{eq:L_time-evolution}
\end{equation}
where
\begin{align}
 \tilde{C}^L=&\sum_{x\in\mathbb{Z}}\ket{x}\bra{x}\otimes C,\\
 \tilde{S}^L=&\sum_{x\in\mathbb{Z}}\ket{x-1}\bra{x}\otimes\ket{0}\bra{0}+\ket{x+1}\bra{x}\otimes\ket{1}\bra{1}.
\end{align}
Looking at the form of the operator $\tilde{S}^L$, we realize that the dynamics of this quantum walk is spatially homogeneous.

Let us suppose that the walker begins to update with a delocalized initial state
\begin{equation}
 \ket{\Phi_0}=\Bigl(\ket{-1}+\ket{0}\Bigr)\otimes\left(\frac{\cos\theta}{\sqrt{2}}\ket{0}+\frac{\sin\theta}{\sqrt{2}}\ket{1}\right).\label{eq:L_initial_state}
\end{equation}
The parameter $\theta$ in Eq. \eqref{eq:L_initial_state} takes the same value of the parameter $\theta$ which gives the unitary operator $C$ in Eq. \eqref{eq:coin-flip_operator}.
The large reason we have picked this initial state is that we substitute the quantum walk on the line for the quantum walk on the half line.
In the view of mathematics, the behavior of quantum walks as $t\to\infty$ is interesting and a lot of long-time limit theorems have been studied because they are capable of accounting for the asymptotic behavior of the walkers as time $t$ becomes large enough.
When we assert limit theorems for quantum walks, their proofs are always required.
As one of the methods to demonstrate them, the Fourier analysis has played an important role.
Although the Fourier analysis is compatible with limit theorems for quantum walks on the line, it is not so accessible to limit theorems for quantum walks on the half line.
But, thanks to the substitution, the analysis method works for the quantum walk on the half line as well. 
We will see that fact in more detail later.
Also, taking a good look at the form of the operator $C$ in Eq. \eqref{eq:coin-flip_operator}, we realize that all the coefficients of the inner states $\ket{0}$ and $\ket{1}$ should stay in the set of real numbers at any time $t$ as long as the walker starts off with the initial state in Eq. \eqref{eq:L_initial_state}.

The finding probability of the walker at time $t$ is given in the same definition for the quantum walk on the half line,
\begin{equation}
 \mathbb{P}(Y_t^L=x)=\bra{\Phi_t}\biggl(\ket{x}\bra{x}\otimes\sum_{j=0}^1\ket{j}\bra{j}\biggr)\ket{\Phi_t},\label{eq:L_probability}
\end{equation}
where $Y_t^L$ means the position of the quantum walker on the line at time $t$.
Figure \ref{fig:160201_10} demonstrates the probability distribution $\mathbb{P}(Y_t^L=x)$.
Viewing Figs. \ref{fig:160201_10}--(a) and (b), we understand that the probability distribution contains two sharp peaks.
We know in Fig. \ref{fig:160201_10}--(c) how the probability distribution is dependent on the parameter $\theta$ of the operator $C$. 
\begin{figure}[h]
\begin{center}
 \begin{minipage}[b]{35mm}
  \begin{center}
   \includegraphics[scale=0.25]{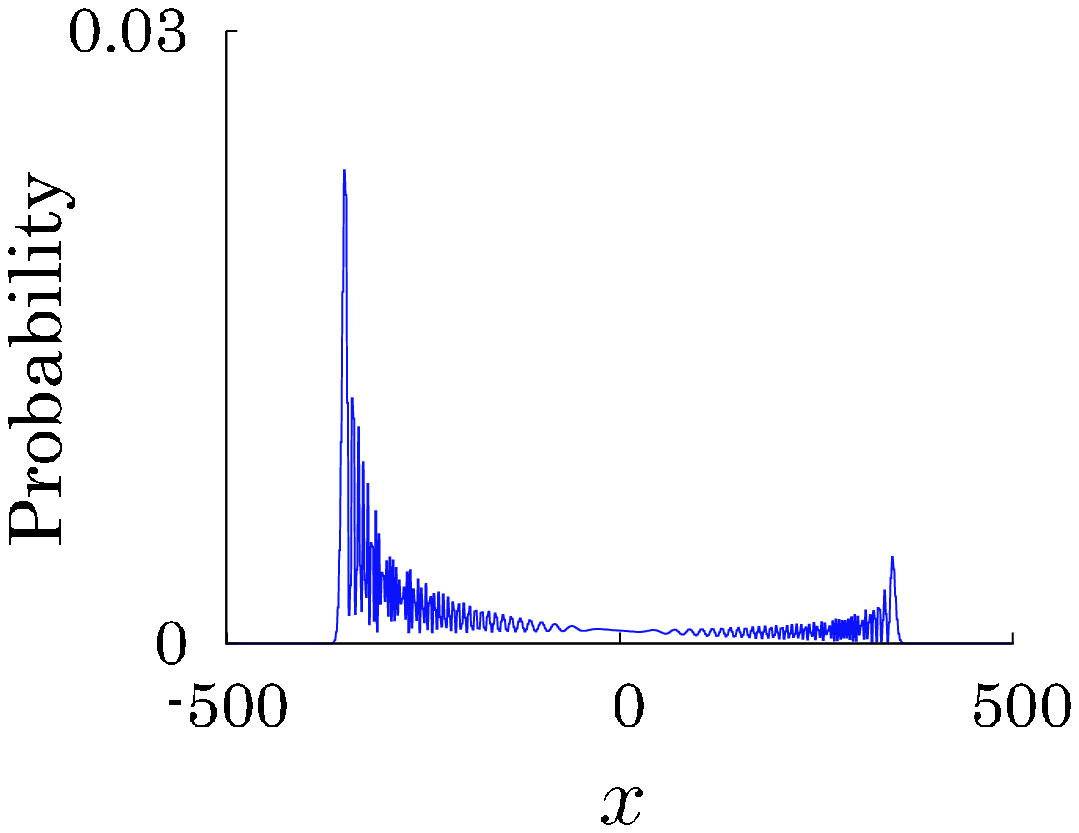}\\[2mm]
  (a)
  \end{center}
 \end{minipage}
 \begin{minipage}[b]{35mm}
  \begin{center}
   \includegraphics[scale=0.15]{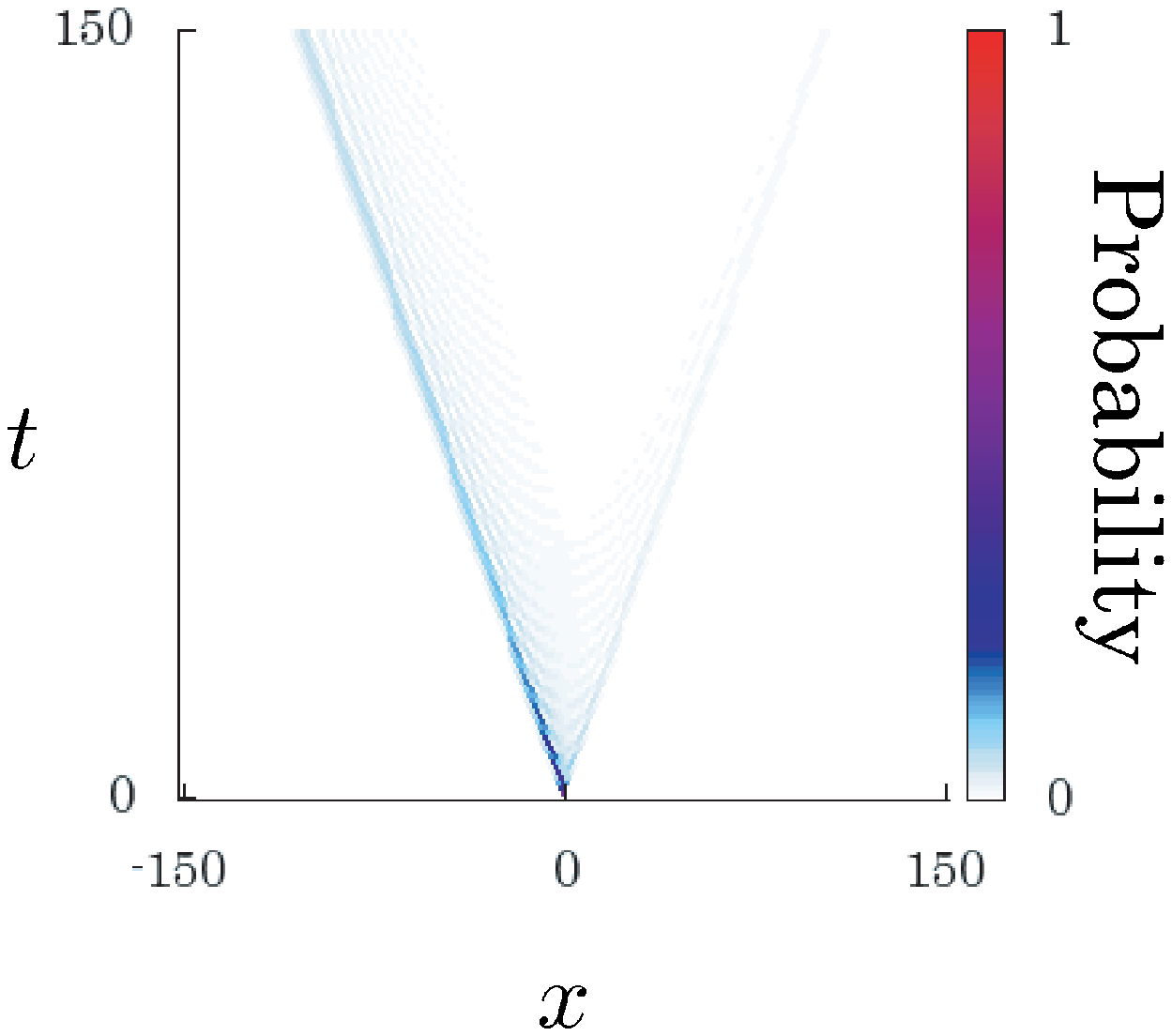}\\[2mm]
  (b)
  \end{center}
 \end{minipage}
 \begin{minipage}[b]{35mm}
  \begin{center}
   \includegraphics[scale=0.15]{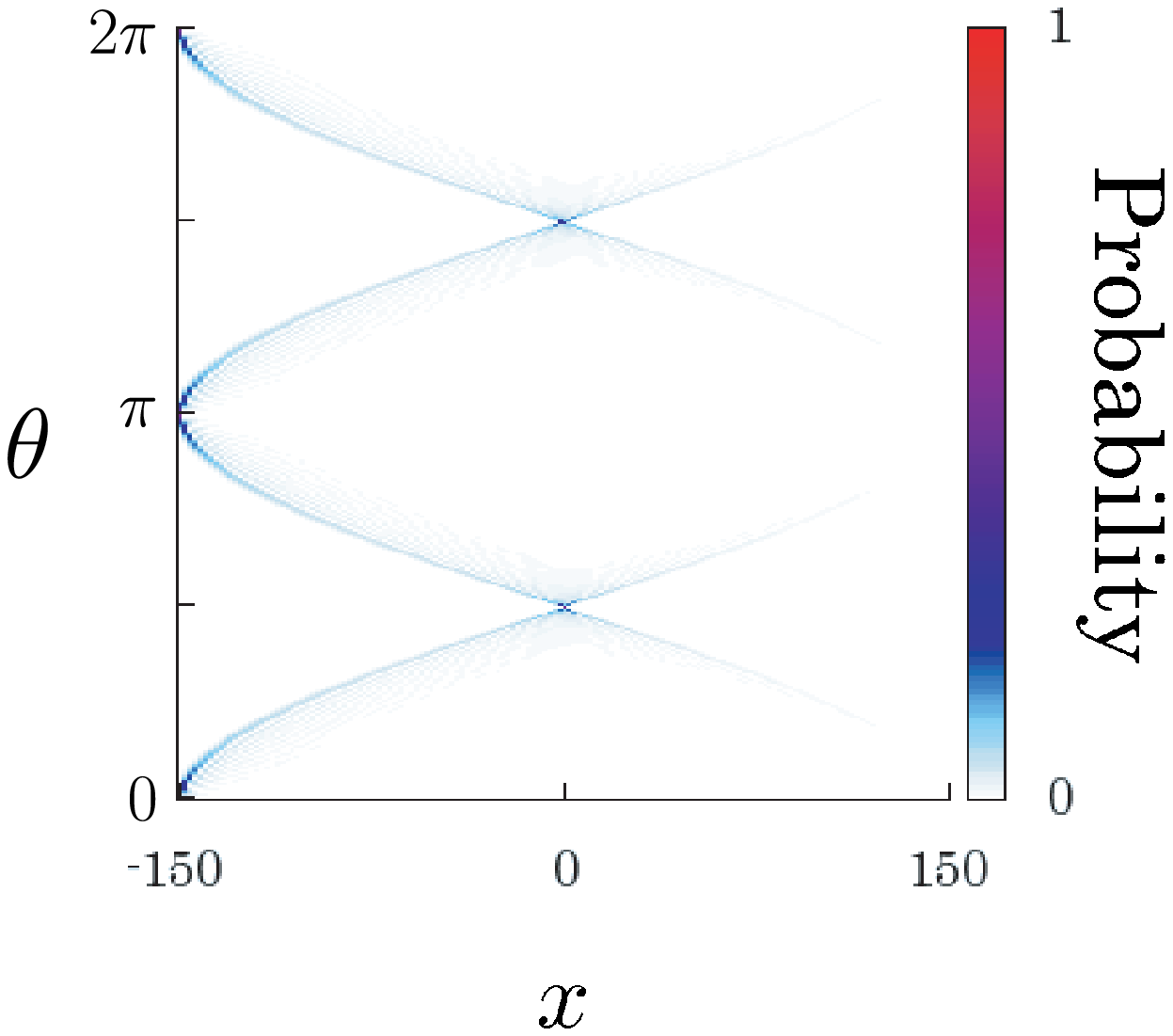}\\[2mm]
  (c)
  \end{center}
 \end{minipage}
\vspace{5mm}
\caption{These pictures show the probability distribution of the quantum walk on the line. (a) $\theta=\pi/4$ : Probability distribution $\mathbb{P}(Y_{500}^L=x)$, (b) $\theta=\pi/4$ : Time evolution of the probability distribution $\mathbb{P}(Y_t^L=x)$, (c) Dependency of the probability distribution $\mathbb{P}(Y_{150}^L=x)$ on the parameter $\theta$.}
\label{fig:160201_10}
\end{center}
\end{figure}

Applying a result in the past study \cite{Konno2002a} to the quantum walk on the line, we are able to encounter an exact representation of the probability distribution.
Assuming that $\theta\neq 0,\pi/2, \pi, 3\pi/2$, one can describe all the positive values of the probability as follows.
\begin{align}
 &\mathbb{P}(Y_t^L=t-2m)=\mathbb{P}(Y_t^L=t-2m-1)\nonumber\\
 =&\frac{c^{2(t-1)}}{2}\sum_{j_1=1}^m \sum_{j_2=1}^m \left(-\frac{s^2}{c^2}\right)^{j_1+j_2} {m-1\choose j_1-1}{m-1\choose j_2-1}{t-m-1\choose j_1-1}{t-m-1\choose j_2-1}\nonumber\\
 &\qquad\qquad\qquad\times\left\{\frac{(m-j_1-j_2)m}{j_1j_2}+\frac{1}{s^2}\right\}\quad (m=1,2,\ldots, [t/2]),\label{eq:151209_5-1}\\[2mm]
 &\mathbb{P}(Y_t^L=-(t-2m)-1)=\mathbb{P}(Y_t^L=-(t-2m))\nonumber\\
 =&\frac{c^{2(t-1)}}{2}\sum_{j_1=1}^m \sum_{j_2=1}^m \left(-\frac{s^2}{c^2}\right)^{j_1+j_2} {m-1\choose j_1-1}{m-1\choose j_2-1}{t-m-1\choose j_1-1}{t-m-1\choose j_2-1}\nonumber\\
 &\qquad\qquad\qquad\times\left\{\frac{(t-m-j_1-j_2)(t-m)}{j_1j_2}+\frac{1}{s^2}\right\}\quad (m=1,2,\ldots, [t/2]),\label{eq:151209_5-2}\\[2mm]
 &\mathbb{P}(Y_t^L=-t-1)=\mathbb{P}(Y_t^L=-t)=\frac{c^{2(t-1)}}{2},\label{eq:151209_5-3}
\end{align}
where we have denoted the floor function by $[\,\cdot\,]$ in Eqs. \eqref{eq:151209_5-1} and \eqref{eq:151209_5-2}, that is, $[x]=\max\left\{n\in\mathbb{Z}\, |\, n\leq x\right\}$ for a real number $x$.

In addition, since the dynamics in Eq. \eqref{eq:L_time-evolution} is spatially homogeneous, one can compute a limit distribution of this quantum walk with the delocalized initial state in Eq. \eqref{eq:L_initial_state} by Fourier analysis as well as the walk on the line with a localized initial state.
Slightly modifying the computation shown in Ref. \cite{GrimmettJansonScudo2004}, we reach a limit distribution
\begin{equation}
 \lim_{t\to\infty}\mathbb{P}\left(\frac{Y_t^L}{t}\leq x\right)=\int_{-\infty}^x \frac{|s|}{\pi(1+y)\sqrt{c^2-y^2}}I_{(-|c|,|c|)}(y)\,dy,\label{eq:L_Fourier_analysis}
\end{equation}
where, for a set $A$, the function $I_A(x)$ represents the indicator function
\begin{equation}
 I_A(x)=\left\{\begin{array}{cl}
	 1&(x\in A) \\
		0&(x\notin A)
	       \end{array}\right..
\end{equation}
Note that this limit distribution is good for the parameter $\theta\neq 0,\pi/2,\pi,3\pi/2$.
Since the behavior of the walker becomes trivial if $\theta=0,\pi/2,\pi, 3\pi/2$, we do not analyze the quantum walks defined by the operator $C$ whose parameter $\theta$ takes such values.
\bigskip

From now on, we are seeing how the quantum walk on the line copies the quantum walk on the half line.
Before that, let us express the systems with the complex numbers $\alpha_t(x), \beta_t(x)$, $\gamma_t(x)$, and $\delta_t(x)$,
\begin{align}
 \ket{\Psi_t}=&\sum_{x=0}^\infty \ket{x}\otimes\Bigl(\alpha_t(x)\ket{0}+\beta_t(x)\ket{1}\Bigr),\label{eq:HL_system_w_ab}\\
 \ket{\Phi_t}=&\sum_{x\in\mathbb{Z}} \ket{x}\otimes\Bigl(\gamma_t(x)\ket{0}+\delta_t(x)\ket{1}\Bigr).\label{eq:L_system_w_gd}
\end{align}
Remembering the initial states in Eqs. \eqref{eq:HL_initial_state} and \eqref{eq:L_initial_state}, we should know the complex numbers expressing the initial states of the quantum walks,
\begin{align}
 \alpha_0(x)=&\left\{\begin{array}{cl}
	       e^{-i\theta}/\sqrt{2} & (x=0)\\
		      0 & (x=1,2,\ldots)
		     \end{array}\right.,\label{eq:alpha_0}\\
 \beta_0(x)=&\left\{\begin{array}{cl}
	       ie^{-i\theta}/\sqrt{2} & (x=0)\\
		      0 & (x=1,2,\ldots)
		     \end{array}\right.,\label{eq:beta_0}\\
 \gamma_0(x)=&\left\{\begin{array}{cl}
	       \cos\theta/\sqrt{2} & (x=-1,0)\\
		      0 & (x\neq -1,0)
		     \end{array}\right.,\label{eq:gamma_0}\\
 \delta_0(x)=&\left\{\begin{array}{cl}
	       \sin\theta/\sqrt{2} & (x=-1,0)\\
		      0 & (x\neq -1,0)
		     \end{array}\right..\label{eq:delta_0}
\end{align}
Moreover, recalling that the quantum walk on the half line can be described by only real numbers in this paper because the initial state is of the form in Eq. \eqref{eq:L_initial_state} and the walker is updating with the operator $C$ in Eq. \eqref{eq:coin-flip_operator}, we are allowed to consider the complex numbers $\gamma_t(x)$ and $\delta_t(x)$ to be real numbers.
This fact may leave our minds for now, but should come back when we read the proof of Theorem \ref{th:151208_18}. 

For the expression in Eq. \eqref{eq:L_system_w_gd}, one can tell a lemma which is only for the quantum walk on the line.
\begin{lem}
\label{lem:151208-5}
 Assume that $\theta\neq 0, \pi$.
 For $x=0,1,2,\ldots$, we have
 \begin{align}
  \delta_{2t}(x-1)=&\delta_{2t}(-x),\label{eq:151208_5-a-1}\\
  s\gamma_{2t}(x-1)-c\delta_{2t}(x-1)=&-s\gamma_{2t}(-x-2)+c\delta_{2t}(-x-2),\label{eq:151208_5-a-2}\\[3mm]
  \delta_{2t+1}(x-1)=&-\delta_{2t+1}(-x),\label{eq:151208_5-i-1}\\
  s\gamma_{2t+1}(x-1)-c\delta_{2t+1}(x-1)=&s\gamma_{2t+1}(-x-2)-c\delta_{2t+1}(-x-2).\label{eq:151208_5-i-2}
 \end{align}
\end{lem}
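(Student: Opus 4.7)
The plan is to induct on $t$, handling the even$\to$odd and odd$\to$even transitions separately. Writing $f_t(y):=s\gamma_t(y)-c\delta_t(y)$ as a shorthand, the four relations in the lemma assert that $\delta_{2t}$ is symmetric under the reflection $y\mapsto -y-1$ (about $-1/2$) while $f_{2t}$ is antisymmetric under $y\mapsto -y-3$ (about $-3/2$), and that at time $2t+1$ the roles of symmetric and antisymmetric are swapped. Reading off Eq. \eqref{eq:L_time-evolution} at the component level gives the one-step recursion
\begin{equation*}
\gamma_{t+1}(y) = c\gamma_t(y+1) + s\delta_t(y+1), \qquad \delta_{t+1}(y) = s\gamma_t(y-1) - c\delta_t(y-1) = f_t(y-1),
\end{equation*}
which is essentially the only machinery I need. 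Since the two reflections are involutions, each stated relation is equivalent to its image under the relevant reflection, so I may regard it as holding for all $y\in\mathbb{Z}$, not merely for $y$ arising from $x\geq 0$.

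For the base case $t=0$, substituting Eqs. \eqref{eq:gamma_0}--\eqref{eq:delta_0} suffices: both $\gamma_0$ and $\delta_0$ are supported on $\{-1,0\}$ with equal values there, so $\delta_0$ is manifestly symmetric about $-1/2$ and $f_0$ vanishes identically, which makes \eqref{eq:151208_5-a-1} and \eqref{eq:151208_5-a-2} immediate. For the step $2t\to 2t+1$, the $\delta$-identity \eqref{eq:151208_5-i-1} then falls out at once: $\delta_{2t+1}(x-1)=f_{2t}(x-2)$ and $\delta_{2t+1}(-x)=f_{2t}(-x-1)$ differ by a sign precisely by the induction hypothesis \eqref{eq:151208_5-a-2} with index shifted by one. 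The analogous $\delta$-identity in the $2t+1\to 2t+2$ direction is obtained the same way from \eqref{eq:151208_5-i-2}.

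The main obstacle will be the $f$-identity at the new time step. Plugging the recursion into the definition of $f_{2t+1}$ gives
\begin{equation*}
f_{2t+1}(y) = sc\,\gamma_{2t}(y+1) + s^2\,\delta_{2t}(y+1) - c\,f_{2t}(y-1),
\end{equation*}
and forming $f_{2t+1}(y)-f_{2t+1}(-y-3)$ produces contributions from $\gamma_{2t}$ and $\delta_{2t}$ at four different points. The $s^2\delta$ piece vanishes immediately by \eqref{eq:151208_5-a-1} with $y\mapsto y+1$, but the $sc\gamma$ and $-cf$ pieces do not obviously cancel against each other. The trick I plan to use is to re-expand the remaining $f_{2t}$ factors as $s\gamma_{2t}-c\delta_{2t}$; after this substitution the $\gamma_{2t}$ terms pair off and cancel identically, and the surviving residue collapses to $c^2[\delta_{2t}(y+1)-\delta_{2t}(-y-2)]$, which is again zero by \eqref{eq:151208_5-a-1}. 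The step $2t+1\to 2t+2$ is structurally identical once the signs dictated by the two reflections are flipped, and this closes the induction.
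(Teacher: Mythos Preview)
Your induction framework, the shorthand $f_t=s\gamma_t-c\delta_t$, and the observation that the four relations are precisely (anti)symmetries of $\delta_t$ and $f_t$ under the two reflections are all sound and match the paper's argument. The base case and both $\delta$-identities are handled correctly.

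The gap is in the $f$-identity step. After forming $f_{2t+1}(y)-f_{2t+1}(-y-3)$ and killing the $s^2\delta_{2t}$ contribution, you propose to re-expand the remaining $f_{2t}$ factors as $s\gamma_{2t}-c\delta_{2t}$ and assert that the $\gamma_{2t}$ terms then cancel identically. They do not: the fully expanded $\gamma_{2t}$ contribution is
\[
sc\bigl[\gamma_{2t}(y+1)-\gamma_{2t}(-y-2)-\gamma_{2t}(y-1)+\gamma_{2t}(-y-4)\bigr],
\]
four values at four distinct sites, and there is no induction hypothesis on $\gamma_{2t}$ alone. Moreover the surviving $c^2\delta_{2t}$ residue lands at $y-1$ and $-y-4$, not at $y+1$ and $-y-2$ as you wrote, and those sites are not paired by the reflection about $-1/2$ either.

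The fix is to go in the opposite direction: instead of expanding $f_{2t}$, eliminate the lone $\gamma_{2t}$ term via $sc\,\gamma_{2t}(y+1)=c\,f_{2t}(y+1)+c^2\delta_{2t}(y+1)$. Your formula then collapses to
\[
f_{2t+1}(y)=c\bigl[f_{2t}(y+1)-f_{2t}(y-1)\bigr]+\delta_{2t}(y+1),
\]
now expressed entirely in quantities governed by the induction hypotheses. The $\delta_{2t}$ piece of $f_{2t+1}(y)-f_{2t+1}(-y-3)$ vanishes by \eqref{eq:151208_5-a-1}, and one application of the antisymmetry \eqref{eq:151208_5-a-2} turns the two $f_{2t}$ brackets into $c[f_{2t}(y+1)+f_{2t}(y-1)]-c[f_{2t}(y-1)+f_{2t}(y+1)]=0$. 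This is exactly how the paper organizes the computation in Eq.~\eqref{eq:160205_2}, grouping the $-x-2$ side as $c\cdot f_{2t}+\delta_{2t}-c\cdot f_{2t}$ before invoking the hypotheses. The odd$\to$even step is then identical up to the sign swap you already anticipated.
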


\begin{proof}{%
First, we prove Eqs. \eqref{eq:151208_5-i-1} and \eqref{eq:151208_5-i-2}, assuming Eqs. \eqref{eq:151208_5-a-1} and \eqref{eq:151208_5-a-2} which will be used for Eqs. \eqref{eq:160205_1} and \eqref{eq:160205_2}.
For $x=0,1,2,\ldots$, Eq. \eqref{eq:L_time-evolution} gives
\begin{align}
 \delta_{2t+1}(x-1)=&s\gamma_{2t}(x-2)-c\delta_{2t}(x-2),\\[2mm]
 \delta_{2t+1}(-x)=&s\gamma_{2t}(-x-1)-c\delta_{2t}(-x-1)\nonumber\\
 =&-s\gamma_{2t}(x-2)+c\delta_{2t}(x-2)\label{eq:160205_1},\\[2mm]
  s\gamma_{2t+1}(x-1)-c\delta_{2t+1}(x-1)=& cs\gamma_{2t}(x)+s^2\delta_{2t}(x)\nonumber\\
 &-cs\gamma_{2t}(x-2)+c^2\delta_{2t}(x-2),\\[2mm]
  s\gamma_{2t+1}(-x-2)-c\delta_{2t+1}(-x-2)=&c\left\{s\gamma_{2t}(-x-1)-c\delta_{2t}(-x-1)\right\}\nonumber\\
 &+\delta_{2t}(-x-1)\nonumber\\
 &-c\left\{s\gamma_{2t}(-x-3)-c\delta_{2t}(-x-3)\right\}\nonumber\\
 =& cs\gamma_{2t}(x)+s^2\delta_{2t}(x)\nonumber\\
 &-cs\gamma_{2t}(x-2)+c^2\delta_{2t}(x-2)\label{eq:160205_2},
\end{align}
from which Eqs. \eqref{eq:151208_5-i-1} and \eqref{eq:151208_5-i-2} follow.
Next let us aim at Eqs. \eqref{eq:151208_5-a-1} and \eqref{eq:151208_5-a-2} where the subscript $t$ is replaced with $t+1$, assuming Eqs. \eqref{eq:151208_5-i-1} and \eqref{eq:151208_5-i-2} which will be used for Eqs. \eqref{eq:160205_3} and \eqref{eq:160205_4}.
For $x=0,1,2,\ldots$, Eq. \eqref{eq:L_time-evolution} gives
\begin{align}
 \delta_{2t+2}(x-1)=&s\gamma_{2t+1}(x-2)-c\delta_{2t+1}(x-2),\\[2mm]
 \delta_{2t+2}(-x)=&s\gamma_{2t+1}(-x-1)-c\delta_{2t+1}(-x-1)\nonumber\\
 =&s\gamma_{2t+1}(x-2)-c\delta_{2t+1}(x-2),\label{eq:160205_3}\\[2mm]
 s\gamma_{2t+2}(x-1)-c\delta_{2t+2}(x-1)=&cs\gamma_{2t+1}(x)+s^2\delta_{2t+1}(x)\nonumber\\
 &-cs\gamma_{2t+1}(x-2)+c^2\delta_{2t+1}(x-2),\\[2mm]
 -s\gamma_{2t+2}(-x-2)+c\delta_{2t+2}(-x-2)=&-c\left\{s\gamma_{2t+1}(-x-1)-c\delta_{2t+1}(-x-1)\right\}\nonumber\\
 &-\delta_{2t+1}(-x-1)\nonumber\\
 &+c\left\{s\gamma_{2t+1}(-x-3)-c\delta_{2t+1}(-x-3)\right\}\nonumber\\
 =&cs\gamma_{2t+1}(x)+s^2\delta_{2t+1}(x)\nonumber\\
 &-cs\gamma_{2t+1}(x-2)+c^2\delta_{2t+1}(x-2).\label{eq:160205_4}
\end{align}
The equations which we aimed at, hence, come up.
Since Eqs. \eqref{eq:gamma_0} and \eqref{eq:delta_0} mean that Eqs. \eqref{eq:151208_5-a-1} and \eqref{eq:151208_5-a-2} are true as $t=0$, one can assert Lemma \ref{lem:151208-5} by mathematical induction.
\qed
}
\end{proof}
\bigskip

The next lemma finally tells us the fact that the quantum walk on the line reproduces the quantum walk on the half line.
\begin{lem}
\label{lem:151208-6}
 Assume that $\theta\neq 0, \pi$.
 For $x=0,1,2,\ldots$, we have
 \begin{align}
  \alpha_{2t}(2x)=&\gamma_{2t}(2x)-i\delta_{2t}(2x),\label{eq:151208_6-a-1}\\
  \beta_{2t}(2x)=&\delta_{2t}(-2x-1)+i\gamma_{2t}(-2x-1),\label{eq:151208_6-a-2}\\
  \alpha_{2t}(2x+1)=&\delta_{2t}(2x+1)+i\gamma_{2t}(2x+1),\label{eq:151208_6-a-3}\\
  \beta_{2t}(2x+1)=&-\gamma_{2t}(-2x-2)+i\delta_{2t}(-2x-2),\label{eq:151208_6-a-4}\\[2mm]
  \alpha_{2t+1}(2x)=&\delta_{2t+1}(2x)+i\gamma_{2t+1}(2x),\label{eq:151208_6-i-1}\\
  \beta_{2t+1}(2x)=&\gamma_{2t+1}(-2x-1)-i\delta_{2t+1}(-2x-1),\label{eq:151208_6-i-2}\\
  \alpha_{2t+1}(2x+1)=&\gamma_{2t+1}(2x+1)-i\delta_{2t+1}(2x+1),\label{eq:151208_6-i-3}\\
  \beta_{2t+1}(2x+1)=&-\delta_{2t+1}(-2x-2)-i\gamma_{2t+1}(-2x-2).\label{eq:151208_6-i-4}
 \end{align}
\end{lem}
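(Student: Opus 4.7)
The plan is to prove the eight identities by induction on $t$, organised in the same even-odd two-step structure used for Lemma \ref{lem:151208-5}. I will verify the four even-time identities \eqref{eq:151208_6-a-1}--\eqref{eq:151208_6-a-4} at $t=0$, derive the odd-time identities at $2t+1$ from the even-time ones at $2t$, and then derive the even-time identities at $2t+2$ from the odd-time ones at $2t+1$. The base case at $t=0$ is immediate from Eqs.~\eqref{eq:alpha_0}--\eqref{eq:delta_0}: at $x=0$ one has $\alpha_0(0)=e^{-i\theta}/\sqrt{2}=\gamma_0(0)-i\delta_0(0)$ and $\beta_0(0)=ie^{-i\theta}/\sqrt{2}=\delta_0(-1)+i\gamma_0(-1)$, while for $x\geq 1$ both sides of every identity vanish by inspection.

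For the inductive step I will first extract the one-step amplitude recurrences from the time evolutions \eqref{eq:HL_time-evolution} and \eqref{eq:L_time-evolution}. On the half line,
\begin{align*}
\alpha_{t+1}(x)&=c\alpha_t(x+1)+s\beta_t(x+1)\quad(x\geq 0),\\
\beta_{t+1}(x)&=s\alpha_t(x-1)-c\beta_t(x-1)\quad(x\geq 1),\\
\beta_{t+1}(0)&=c\alpha_t(0)+s\beta_t(0),
\end{align*}
and on the line $\gamma_{t+1}(x)=c\gamma_t(x+1)+s\delta_t(x+1)$, $\delta_{t+1}(x)=s\gamma_t(x-1)-c\delta_t(x-1)$ for every $x\in\mathbb{Z}$. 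For each of the eight identities I will apply the relevant half-line recurrence to the left-hand side and substitute the inductive hypothesis, producing a linear combination of $\gamma$ and $\delta$ amplitudes at the previous time evaluated at both positive and negative positions. Expanding the right-hand side via the line recurrences gives a linear combination supported only on one side of the origin. The discrepancy is precisely absorbed by the symmetries of Lemma \ref{lem:151208-5}, with Eqs.~\eqref{eq:151208_5-a-1}--\eqref{eq:151208_5-a-2} used in the $2t\to 2t+1$ step and Eqs.~\eqref{eq:151208_5-i-1}--\eqref{eq:151208_5-i-2} used in the $2t+1\to 2t+2$ step.

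I expect the main obstacle to be the half-line boundary at $x=0$, where $\beta_{t+1}(0)$ obeys a different rule than the bulk. This case appears in the $x=0$ instances of \eqref{eq:151208_6-a-2} and \eqref{eq:151208_6-i-2}, and its verification requires invoking Lemma \ref{lem:151208-5} at $x=0$ to replace $\delta_{2t}(-1)$ by $\delta_{2t}(0)$ (respectively $\delta_{2t+1}(-1)$ by $-\delta_{2t+1}(0)$), so that the contribution from the special shift $\ket{0}\otimes\ket{0}\mapsto\ket{0}\otimes\ket{1}$ matches the expansion of $\gamma_{t+1}(-1)$ and $\delta_{t+1}(-1)$ produced by the bulk line recurrence. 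Apart from this boundary care, each identity reduces to a routine linear-algebra check, so the main challenge is simply maintaining consistent parity and index-shift bookkeeping across the eight cases so that the correct instance of Lemma \ref{lem:151208-5} is invoked in each one.
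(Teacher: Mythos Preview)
Your proposal is correct and follows essentially the same route as the paper's own proof: two-step induction (even time $\to$ odd time $\to$ next even time), applying the half-line recurrences to the left-hand sides, substituting the inductive hypothesis, and then using Lemma~\ref{lem:151208-5} together with the line recurrences to recognise the right-hand sides, with the $\beta_{t+1}(0)$ boundary treated separately. The paper also isolates the $x=0$ case for $\alpha_{2t+1}(0)$ and $\alpha_{2t+2}(0)$, but only for expository clarity---as you note, the genuinely special rule is for $\beta$ at the origin, exactly where you flag it.
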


\begin{proof}{%
This lemma can be proved by mathematical induction.
Equations \eqref{eq:alpha_0}--\eqref{eq:delta_0} guarantee Eqs. \eqref{eq:151208_6-a-1}--\eqref{eq:151208_6-a-4} as $t=0$.
Let us assume Eqs. \eqref{eq:151208_6-a-1}--\eqref{eq:151208_6-a-4}.
Under the assumption, Eq. \eqref{eq:HL_time-evolution} expresses $\alpha_{2t+1}(x)$ and $\beta_{2t+1}(x)$ in terms of $\gamma_{2t}(x)$ and $\delta_{2t}(x)$.
Due to the spatial inhomogeneity of the operator $\tilde{S}^{HL}$ in Eq. \eqref{eq:HL_shift_operator}, we should take care of the inner states at the origin separately from the other positions,
\begin{align}
 \alpha_{2t+1}(0)=&c\alpha_{2t}(1)+s\beta_{2t}(1)=-s\gamma_{2t}(-2)+c\delta_{2t}(1)+i\left\{c\gamma_{2t}(1)+s\delta_{2t}(-2)\right\},\label{eq:160207-1}\\
 \beta_{2t+1}(0)=&c\alpha_{2t}(0)+s\beta_{2t}(0)=c\gamma_{2t}(0)+s\delta_{2t}(-1)+i\left\{s\gamma_{2t}(-1)-c\delta_{2t}(0)\right\}.\label{eq:160207-2}
\end{align}
On the other hand, for $x=1,2,\ldots$, we have
\begin{align}
 \alpha_{2t+1}(2x)=&c\alpha_{2t}(2x+1)+s\beta_{2t}(2x+1)\nonumber\\
 =&-s\gamma_{2t}(-2x-2)+c\delta_{2t}(2x+1)\nonumber\\
 &+i\left\{c\gamma_{2t}(2x+1)+s\delta_{2t}(-2x-2)\right\},\label{eq:160207-3}\\[2mm]
 \beta_{2t+1}(2x)=&s\alpha_{2t}(2x-1)-c\beta_{2t}(2x-1)\nonumber\\
 =&c\gamma_{2t}(-2x)+s\delta_{2t}(2x-1)+i\left\{s\gamma_{2t}(2x-1)-c\delta_{2t}(-2x)\right\}.\label{eq:160207-4}
\end{align}
As for Eqs. \eqref{eq:151208_6-a-3} and  \eqref{eq:151208_6-a-4}, for $x=0,1,2,\ldots$, we have
\begin{align}
 \alpha_{2t+1}(2x+1)=&c\alpha_{2t}(2x+2)+s\beta_{2t}(2x+2)\nonumber\\
 =&c\gamma_{2t}(2x+2)+s\delta_{2t}(-2x-3)\nonumber\\
 &+i\left\{s\gamma_{2t}(-2x-3)-c\delta_{2t}(2x+2)\right\},\label{eq:160207-5}\\[2mm]
 \beta_{2t+1}(2x+1)=&s\alpha_{2t}(2x)-c\beta_{2t}(2x)\nonumber\\
 =&s\gamma_{2t}(2x)-c\delta_{2t}(-2x-1)-i\left\{c\gamma_{2t}(-2x-1)+s\delta_{2t}(2x)\right\}.\label{eq:160207-6}
\end{align}
The usage of Lemma \ref{lem:151208-5} and Eq. \eqref{eq:L_time-evolution} summons $\gamma_{2t+1}(x)$ and $\delta_{2t+1}(x)$ in Eqs. \eqref{eq:160207-1}--\eqref{eq:160207-6},
\begin{align}
 \alpha_{2t+1}(0)=&s\gamma_{2t}(-1)-c\delta_{2t}(-1)+i\left\{c\gamma_{2t}(1)+s\delta_{2t}(1)\right\}\nonumber\\
 =&\delta_{2t+1}(0)+i\gamma_{2t+1}(0),\\[2mm]
 \beta_{2t+1}(0)=&c\gamma_{2t}(0)+s\delta_{2t}(0)+i\left\{-s\gamma_{2t}(-2)+c\delta_{2t}(-2)\right\}\nonumber\\
 =&\gamma_{2t+1}(-1)-i\delta_{2t+1}(-1),
\end{align}
\begin{align}
 \alpha_{2t+1}(2x)=&s\gamma_{2t}(2x-1)-c\delta_{2t}(2x-1)+i\left\{c\gamma_{2t}(2x+1)+s\delta_{2t}(2x+1)\right\}\nonumber\\
 =&\delta_{2t+1}(2x)+i\gamma_{2t+1}(2x)\qquad (x=1,2,\ldots),\\[2mm]
 \beta_{2t+1}(2x)=&c\gamma_{2t}(-2x)+s\delta_{2t}(-2x)+i\left\{-s\gamma_{2t}(-2x-2)+c\delta_{2t}(-2x-2)\right\}\nonumber\\
 =&\gamma_{2t+1}(-2x-1)-i\delta_{2t+1}(-2x-1)\qquad (x=1,2,\ldots),
\end{align}
\begin{align}
 \alpha_{2t+1}(2x+1)=&c\gamma_{2t}(2x+2)+s\delta_{2t}(2x+2)+i\left\{-s\gamma_{2t}(2x)+c\delta_{2t}(2x)\right\}\nonumber\\
 =&\gamma_{2t+1}(2x+1)-i\delta_{2t+1}(2x+1)\qquad (x=0,1,2,\ldots),\\[2mm]
 \beta_{2t+1}(2x+1)=&-s\gamma_{2t}(-2x-3)+c\delta_{2t}(-2x-3)\nonumber\\
 &-i\left\{c\gamma_{2t}(-2x-1)+s\delta_{2t}(-2x-1)\right\}\nonumber\\
 =&-\delta_{2t+1}(-2x-2)-i\gamma_{2t+1}(-2x-2)\qquad (x=0,1,2,\ldots),
\end{align}
which are all combined as Eqs. \eqref{eq:151208_6-i-1}--\eqref{eq:151208_6-i-4}.

Next, assuming Eqs. \eqref{eq:151208_6-i-1}--\eqref{eq:151208_6-i-4}, we derive Eqs. \eqref{eq:151208_6-a-1}--\eqref{eq:151208_6-a-4} in which the subscript $t$ is replaced with $t+1$.
We take the assumption and repeat a similar computation using Eq. \eqref{eq:HL_time-evolution}, Eq. \eqref{eq:L_time-evolution}, and Lemma \ref{lem:151208-5},
\begin{align}
 \alpha_{2t+2}(0)=&c\alpha_{2t+1}(1)+s\beta_{2t+1}(1)\nonumber\\
 =&c\gamma_{2t+1}(1)-s\delta_{2t+1}(-2)-i\left\{s\gamma_{2t+1}(-2)+c\delta_{2t+1}(1)\right\}\nonumber\\
 =&c\gamma_{2t+1}(1)+s\delta_{2t+1}(1)-i\left\{s\gamma_{2t+1}(-1)-c\delta_{2t+1}(-1)\right\}\nonumber\\
 =&\gamma_{2t+2}(0)-i\delta_{2t+2}(0),\\[2mm]
 \beta_{2t+2}(0)=&c\alpha_{2t+1}(0)+s\beta_{2t+1}(0)\nonumber\\
 =&s\gamma_{2t+1}(-1)+c\delta_{2t+1}(0)+i\left\{c\gamma_{2t+1}(0)-s\delta_{2t+1}(-1)\right\}\nonumber\\
 =&s\gamma_{2t+1}(-2)-c\delta_{2t+1}(-2)+i\left\{c\gamma_{2t+1}(0)+s\delta_{2t+1}(0)\right\}\nonumber\\
 =&\delta_{2t+2}(-1)+i\gamma_{2t+2}(-1),
\end{align}
\begin{align}
 \alpha_{2t+2}(2x)=&c\alpha_{2t+1}(2x+1)+s\beta_{2t+1}(2x+1)\nonumber\\
 =&c\gamma_{2t+1}(2x+1)-s\delta_{2t+1}(-2x-2)\nonumber\\
 &-i\left\{s\gamma_{2t+1}(-2x-2)+c\delta_{2t+1}(2x+1)\right\}\nonumber\\
 =&c\gamma_{2t+1}(2x+1)+s\delta_{2t+1}(2x+1)\nonumber\\
 &-i\left\{s\gamma_{2t+1}(2x-1)-c\delta_{2t+1}(2x-1)\right\}\nonumber\\
 =&\gamma_{2t+2}(2x)-i\delta_{2t+2}(2x)\qquad (x=1,2,\ldots),\\[2mm]
 \beta_{2t+2}(2x)=&s\alpha_{2t+1}(2x-1)-c\beta_{2t+1}(2x-1)\nonumber\\
 =&s\gamma_{2t+1}(2x-1)+c\delta_{2t+1}(-2x)\nonumber\\
 &+i\left\{c\gamma_{2t+1}(-2x)-s\delta_{2t+1}(2x-1)\right\}\nonumber\\
 =&s\gamma_{2t+1}(-2x-2)-c\delta_{2t+1}(-2x-2)\nonumber\\
 &+i\left\{c\gamma_{2t+1}(-2x)+s\delta_{2t+1}(-2x)\right\}\nonumber\\
 =&\delta_{2t+2}(-2x-1)+i\gamma_{2t+2}(-2x-1)\qquad (x=1,2,\ldots),
\end{align}
\begin{align}
 \alpha_{2t+2}(2x+1)=&c\alpha_{2t+1}(2x+2)+s\beta_{2t+1}(2x+2)\nonumber\\
 =&s\gamma_{2t+1}(-2x-3)+c\delta_{2t+1}(2x+2)\nonumber\\
 &+i\left\{c\gamma_{2t+1}(2x+2)-s\delta_{2t+1}(-2x-3)\right\}\nonumber\\
 =&s\gamma_{2t+1}(2x)-c\delta_{2t+1}(2x)\nonumber\\
 &+i\left\{c\gamma_{2t+1}(2x+2)+s\delta_{2t+1}(2x+2)\right\}\nonumber\\
 =&\delta_{2t+2}(2x+1)+i\gamma_{2t+2}(2x+1)\qquad (x=0,1,2,\ldots),\\[2mm]
 \beta_{2t+2}(2x+1)=&s\alpha_{2t+1}(2x)-c\beta_{2t+1}(2x)\nonumber\\
 =&-c\gamma_{2t+1}(-2x-1)+s\delta_{2t+1}(2x)\nonumber\\
 &+i\left\{s\gamma_{2t+1}(2x)+c\delta_{2t+1}(-2x-1)\right\}\nonumber\\
 =&-c\gamma_{2t+1}(-2x-1)-s\delta_{2t+1}(-2x-1)\nonumber\\
 &+i\left\{s\gamma_{2t+1}(-2x-3)-c\delta_{2t+1}(-2x-3)\right\}\nonumber\\
 =&-\gamma_{2t+2}(-2x-2)+i\delta_{2t+2}(-2x-2)\qquad (x=0,1,2,\ldots).
\end{align}
Consequently, for $x=0,1,2,\ldots$, a bunch of equations have been figured out,
\begin{align}
 \alpha_{2t+2}(2x)=&\gamma_{2t+2}(2x)-i\delta_{2t+2}(2x),\\
 \beta_{2t+2}(2x)=&\delta_{2t+2}(-2x-1)+i\gamma_{2t+2}(-2x-1),\\
 \alpha_{2t+2}(2x+1)=&\delta_{2t+2}(2x+1)+i\gamma_{2t+2}(2x+1),\\
 \beta_{2t+2}(2x+1)=&-\gamma_{2t+2}(-2x-2)+i\delta_{2t+2}(-2x-2),
\end{align}
and they are the things which were wanted under the assumption Eqs. \eqref{eq:151208_6-i-1}--\eqref{eq:151208_6-i-4}.
Based on all the fact shown here, one can tell Lemma \ref{lem:151208-5} by mathematical induction.
\qed
}
\end{proof}
\bigskip

Now that the relation between two quantum walks has been discovered, it gives a connection between the finding probabilities.
\begin{thm}
\label{th:151208_18}
 Assume that $\theta\neq 0, \pi$.
 For $x=0,1,2,\ldots$, we have
 \begin{align}
  \mathbb{P}(X_t^{HL}=x;0)=&\mathbb{P}(Y_t^L=x),\\
  \mathbb{P}(X_t^{HL}=x;1)=&\mathbb{P}(Y_t^L=-x-1),\\
  \mathbb{P}(X_t^{HL}=x)=&\mathbb{P}(Y_t^L=-x-1)+\mathbb{P}(Y_t^L=x).
 \end{align}
\end{thm}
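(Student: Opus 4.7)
The plan is to express the three probabilities in terms of amplitudes and then apply Lemma \ref{lem:151208-6} componentwise. By the definitions \eqref{eq:probability_inner_state}, \eqref{eq:HL_probability}, \eqref{eq:L_probability} together with \eqref{eq:HL_system_w_ab} and \eqref{eq:L_system_w_gd}, we have $\mathbb{P}(X_t^{HL}=x;0) = |\alpha_t(x)|^2$, $\mathbb{P}(X_t^{HL}=x;1) = |\beta_t(x)|^2$, and $\mathbb{P}(Y_t^L=y) = |\gamma_t(y)|^2 + |\delta_t(y)|^2$. The key enabling observation, already flagged in the paragraph preceding Lemma \ref{lem:151208-5}, is that $\gamma_t(y)$ and $\delta_t(y)$ are real-valued for every $t$ and $y$ under our initial state. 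This reality is precisely what turns each identity of Lemma \ref{lem:151208-6}, in which a half-line amplitude splits as a real part taken from one of $\{\gamma_t,\delta_t\}$ plus $i$ times the other, into a clean modulus-squared identity.

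The argument I would give is to split into four cases according to the parities of $t$ and $x$, matching the eight equations of Lemma \ref{lem:151208-6} in pairs. For the representative case $t$ even and $x$ even, the identity \eqref{eq:151208_6-a-1} combined with the reality of $\gamma_{2t},\delta_{2t}$ gives $|\alpha_{2t}(2x)|^2 = \gamma_{2t}(2x)^2 + \delta_{2t}(2x)^2 = \mathbb{P}(Y_{2t}^L = 2x)$, which is the first claim at this parity, and its partner \eqref{eq:151208_6-a-2} yields $|\beta_{2t}(2x)|^2 = \delta_{2t}(-2x-1)^2 + \gamma_{2t}(-2x-1)^2 = \mathbb{P}(Y_{2t}^L = -2x-1)$, the second. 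The three remaining parity combinations are entirely analogous: every line of Lemma \ref{lem:151208-6} has the shape $\gamma_t(y)\pm i\delta_t(y)$ or $\delta_t(y)\pm i\gamma_t(y)$ with $y$ equal to either $x$ or $-x-1$, so its squared modulus always equals $\gamma_t(y)^2+\delta_t(y)^2 = \mathbb{P}(Y_t^L = y)$. A quick inspection of the eight formulas confirms that the argument paired with $\alpha_t(x)$ is always $x$ and the one paired with $\beta_t(x)$ is always $-x-1$ (using, e.g., $-2x-2 = -(2x+1)-1$ to recast the $\beta_t(2x+1)$ lines in the uniform form).

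Once the first two identities are established, the third is obtained at once by summing them via \eqref{eq:HL_probability}. The hypothesis $\theta\neq 0,\pi$ is carried over from Lemma \ref{lem:151208-6} (which in turn inherits it from Lemma \ref{lem:151208-5}), and no additional restriction appears. I do not anticipate a genuine obstacle here; the theorem is essentially a bookkeeping corollary of Lemma \ref{lem:151208-6}, and the only thing that requires a bit of care is matching the parities of $t$ and $x$ to the correct line of that lemma and rewriting the $\beta$-indices so that the target position of $Y_t^L$ comes out uniformly as $-x-1$.
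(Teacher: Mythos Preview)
Your proposal is correct and follows essentially the same approach as the paper: express the probabilities as $|\alpha_t(x)|^2$, $|\beta_t(x)|^2$, $|\gamma_t(y)|^2+|\delta_t(y)|^2$, invoke the reality of $\gamma_t,\delta_t$, and read off from Lemma~\ref{lem:151208-6} that $|\alpha_t(x)|^2=\gamma_t(x)^2+\delta_t(x)^2$ and $|\beta_t(x)|^2=\gamma_t(-x-1)^2+\delta_t(-x-1)^2$. The paper's proof is slightly terser (it does not spell out the four parity cases) but the argument is the same.
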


\begin{proof}{%
First, let us recall the finding probabilities in Eqs. \eqref{eq:probability_inner_state}, \eqref{eq:HL_probability}, and \eqref{eq:L_probability}, and the expressions in Eqs. \eqref{eq:HL_system_w_ab} and \eqref{eq:L_system_w_gd}.
Then we get the representation of each probability by the complex numbers $\alpha_t(x), \beta_t(x), \gamma_t(x)$, and $\delta_t(x)$,
\begin{align}
 \mathbb{P}(X_t^{HL}=x;0)=&|\alpha_t(x)|^2,\\
 \mathbb{P}(X_t^{HL}=x;1)=&|\beta_t(x)|^2,\\
 \mathbb{P}(Y_t^L=x)=&|\gamma_t(x)|^2+|\delta_t(x)|^2.
\end{align}
On the other hand, since the complex numbers $\gamma_t(x)$ and $\delta_t(x)$ always stay in the set of real numbers in this paper, Lemma \ref{lem:151208-6} shows
\begin{align}
 |\alpha_t(x)|^2=&\gamma_t(x)^2+\delta_t(x)^2=|\gamma_t(x)|^2+|\delta_t(x)|^2,\\
 |\beta_t(x)|^2=&\gamma_t(-x-1)^2+\delta_t(-x-1)^2=|\gamma_t(-x-1)|^2+|\delta_t(-x-1)|^2,
\end{align}
from which we find the statement of Theorem \ref{th:151208_18}.
\qed
}
\end{proof}
\bigskip

Combining Eqs. \eqref{eq:151209_5-1}, \eqref{eq:151209_5-2}, and \eqref{eq:151209_5-3} with Theorem \ref{th:151208_18}, we see a representation of the probability distribution $\mathbb{P}(X_t^{HL}=x;j)$.

\begin{lem}
\label{lem:151209_7p}
 Assume that $\theta\neq 0,\pi/2, \pi, 3\pi/2$.
 One can describe all the positive values of the probability $\mathbb{P}(X_t^{HL}=x;j)\,(j\in\left\{0,1\right\})$ as follows.
 For $t=1,2,\ldots$, we have
 \begin{align}
  &\mathbb{P}(X_{2t}^{HL}=2(t-m);0)=\mathbb{P}(X_{2t}^{HL}=2(t-m)-1;0)\nonumber\\
  =&\frac{c^{2(2t-1)}}{2}\sum_{j_1=1}^m \sum_{j_2=1}^m \left(-\frac{s^2}{c^2}\right)^{j_1+j_2} {m-1\choose j_1-1}{m-1\choose j_2-1}{2t-m-1\choose j_1-1}\nonumber\\
  \times & {2t-m-1\choose j_2-1}\left\{\frac{(m-j_1-j_2)m}{j_1j_2}+\frac{1}{s^2}\right\}\quad (m=1,2,\ldots, t-1),\\[2mm]
  &\mathbb{P}(X_{2t}^{HL}=2(t-m);1)=\mathbb{P}(X_{2t}^{HL}=2(t-m)-1;1)\nonumber\\
  =&\frac{c^{2(2t-1)}}{2}\sum_{j_1=1}^m \sum_{j_2=1}^m \left(-\frac{s^2}{c^2}\right)^{j_1+j_2} {m-1\choose j_1-1}{m-1\choose j_2-1}{2t-m-1\choose j_1-1}\nonumber\\
  \times & {2t-m-1\choose j_2-1}\left\{\frac{(2t-m-j_1-j_2)(2t-m)}{j_1j_2}+\frac{1}{s^2}\right\}\quad (m=1,2,\ldots, t-1),\\[2mm]
  &\mathbb{P}(X_{2t}^{HL}=0;0)=\mathbb{P}(X_{2t}^{HL}=0;1)\nonumber\\
  =&\frac{c^{2(2t-1)}}{2}\sum_{j_1=1}^t \sum_{j_2=1}^t \left(-\frac{s^2}{c^2}\right)^{j_1+j_2} {t-1\choose j_1-1}^2 {t-1\choose j_2-1}^2 \left\{\frac{(t-j_1-j_2)t}{j_1j_2}+\frac{1}{s^2}\right\},\\[2mm]
  &\mathbb{P}(X_{2t}^{HL}=2t;1)=\mathbb{P}(X_{2t}^{HL}=2t-1;1)=\frac{c^{2(2t-1)}}{2}.
 \end{align}
 For $t=0,1,2,\ldots$, we have
 \begin{align}
  &\mathbb{P}(X_{2t+1}^{HL}=2(t-m)+1;0)=\mathbb{P}(X_{2t+1}^{HL}=2(t-m);0)\nonumber\\
  =&\frac{c^{4t}}{2}\sum_{j_1=1}^m \sum_{j_2=1}^m \left(-\frac{s^2}{c^2}\right)^{j_1+j_2} {m-1\choose j_1-1}{m-1\choose j_2-1}{2t-m\choose j_1-1}{2t-m\choose j_2-1}\nonumber\\
  &\qquad\qquad\qquad\times\left\{\frac{(m-j_1-j_2)m}{j_1j_2}+\frac{1}{s^2}\right\}\quad (m=1,2,\ldots, t),\\[2mm]
  &\mathbb{P}(X_{2t+1}^{HL}=2(t-m)+1;1)=\mathbb{P}(X_{2t+1}^{HL}=2(t-m);1)\nonumber\\
  =&\frac{c^{4t}}{2}\sum_{j_1=1}^m \sum_{j_2=1}^m \left(-\frac{s^2}{c^2}\right)^{j_1+j_2} {m-1\choose j_1-1}{m-1\choose j_2-1}{2t-m\choose j_1-1}{2t-m\choose j_2-1}\nonumber\\
  &\times\left\{\frac{(2t+1-m-j_1-j_2)(2t+1-m)}{j_1j_2}+\frac{1}{s^2}\right\}\quad (m=1,2,\ldots, t),\\[2mm]
   &\mathbb{P}(X_{2t+1}^{HL}=2t+1;1)=\mathbb{P}(X_{2t+1}^{HL}=2t;1)=\frac{c^{4t}}{2}.
 \end{align}
\end{lem}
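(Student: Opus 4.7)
The plan is to treat this lemma as essentially a bookkeeping consequence of Theorem \ref{th:151208_18} combined with the explicit formulas \eqref{eq:151209_5-1}, \eqref{eq:151209_5-2}, \eqref{eq:151209_5-3} for the quantum walk on the line. By Theorem \ref{th:151208_18},
\begin{equation*}
 \mathbb{P}(X_t^{HL}=x;0)=\mathbb{P}(Y_t^L=x),\qquad \mathbb{P}(X_t^{HL}=x;1)=\mathbb{P}(Y_t^L=-x-1),
\end{equation*}
so everything reduces to reading off the right hand sides from the known exact representation.

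First I would handle the even-time case. Substituting $t\mapsto 2t$ in \eqref{eq:151209_5-1}, the identity $\mathbb{P}(Y_{2t}^L=2t-2m)=\mathbb{P}(Y_{2t}^L=2t-2m-1)$ (valid for $m=1,\ldots,t$) matches exactly the positions $x=2(t-m)$ and $x=2(t-m)-1$; reading off the closed form and invoking Theorem \ref{th:151208_18} with $j=0$ yields the first claimed formula for the range $m=1,\ldots,t-1$. Similarly, substituting $t\mapsto 2t$ in \eqref{eq:151209_5-2} produces $\mathbb{P}(Y_{2t}^L=-(2t-2m)-1)=\mathbb{P}(Y_{2t}^L=-(2t-2m))$, which via Theorem \ref{th:151208_18} with $j=1$ becomes the formulas for $\mathbb{P}(X_{2t}^{HL}=2(t-m);1)=\mathbb{P}(X_{2t}^{HL}=2(t-m)-1;1)$. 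The maximum-position case $x\in\{2t,2t-1\}$ with $j=1$ follows from \eqref{eq:151209_5-3} taken at $t\mapsto 2t$, since $-x-1\in\{-2t-1,-2t\}$.

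Next I would address the origin $x=0$ at even time. This corresponds to the boundary case $m=t$ in both \eqref{eq:151209_5-1} and \eqref{eq:151209_5-2} (with $t\mapsto 2t$): the former gives $\mathbb{P}(Y_{2t}^L=0)=\mathbb{P}(X_{2t}^{HL}=0;0)$, the latter $\mathbb{P}(Y_{2t}^L=-1)=\mathbb{P}(X_{2t}^{HL}=0;1)$. The key observation here—essentially the only non-mechanical step of the proof—is that plugging $m=t$ makes $(m-j_1-j_2)m/(j_1j_2)$ and $(2t-m-j_1-j_2)(2t-m)/(j_1j_2)$ coincide, while the two pairs of binomial coefficients collapse to $\binom{t-1}{j_1-1}^2\binom{t-1}{j_2-1}^2$; this yields the single unified formula for $\mathbb{P}(X_{2t}^{HL}=0;0)=\mathbb{P}(X_{2t}^{HL}=0;1)$.

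The odd-time case $t\mapsto 2t+1$ is completely parallel: $[t/2]=t$ in \eqref{eq:151209_5-1} and \eqref{eq:151209_5-2} gives the range $m=1,\ldots,t$ for the general positions $x=2(t-m)+1$ and $x=2(t-m)$; the binomials $\binom{t-m-1}{j-1}$ become $\binom{2t-m}{j-1}$, and the factor $(t-m-j_1-j_2)(t-m)$ in \eqref{eq:151209_5-2} becomes $(2t+1-m-j_1-j_2)(2t+1-m)$, yielding precisely the stated expressions. The extremal identity \eqref{eq:151209_5-3} with $t\mapsto 2t+1$ produces $\mathbb{P}(X_{2t+1}^{HL}=2t+1;1)=\mathbb{P}(X_{2t+1}^{HL}=2t;1)=c^{4t}/2$. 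No hard step is involved; the only thing that requires care is the index translation at the origin, where two formulas coalesce into one because of the coincidence of the inner combinatorial expressions at $m=t$.
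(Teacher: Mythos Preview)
Your proposal is correct and follows exactly the approach the paper itself indicates: the lemma is stated immediately after the sentence ``Combining Eqs.~\eqref{eq:151209_5-1}, \eqref{eq:151209_5-2}, and \eqref{eq:151209_5-3} with Theorem~\ref{th:151208_18}, we see a representation of the probability distribution $\mathbb{P}(X_t^{HL}=x;j)$,'' with no further proof given. You have simply spelled out the index translation in more detail than the paper does, including the observation that the two formulas coalesce at $m=t$ for the origin at even times.
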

\bigskip

From Lemma \ref{lem:151209_7p}, one can obtain a theorem for the probability distribution $\mathbb{P}(X_t^{HL}=x)$.
 
\begin{thm}
\label{th:151209_6}
 Assume that $\theta\neq 0,\pi/2, \pi, 3\pi/2$.
 One can describe all the positive values of the probability $\mathbb{P}(X_t^{HL}=x)$ as follows.
 For $t=1,2,\ldots$, we have
 \begin{align}
  &\mathbb{P}(X_{2t}^{HL}=2(t-m))=\mathbb{P}(X_{2t}^{HL}=2(t-m)-1)\nonumber\\
  =&\frac{c^{2(2t-1)}}{2}\sum_{j_1=1}^m \sum_{j_2=1}^m \left(-\frac{s^2}{c^2}\right)^{j_1+j_2} {m-1\choose j_1-1}{m-1\choose j_2-1}{2t-m-1\choose j_1-1}\nonumber\\
  &\times{2t-m-1\choose j_2-1}\left\{\frac{m^2+(2t-m)^2-2(j_1+j_2)t}{j_1j_2}+\frac{2}{s^2}\right\}\nonumber\\
  &\hspace{7cm} (m=1,2,\ldots, t-1),\\[2mm]
  &\mathbb{P}(X_{2t}^{HL}=0)\nonumber\\
  =&c^{2(2t-1)}\sum_{j_1=1}^t \sum_{j_2=1}^t \left(-\frac{s^2}{c^2}\right)^{j_1+j_2} {t-1\choose j_1-1}^2 {t-1\choose j_2-1}^2 \left\{\frac{(t-j_1-j_2)t}{j_1j_2}+\frac{1}{s^2}\right\},\\[2mm]
  &\mathbb{P}(X_{2t}^{HL}=2t)=\mathbb{P}(X_{2t}^{HL}=2t-1)=\frac{c^{2(2t-1)}}{2}.
 \end{align}
 For $t=0,1,2,\ldots$, we have
 \begin{align}
  &\mathbb{P}(X_{2t+1}^{HL}=2(t-m)+1)=\mathbb{P}(X_{2t+1}^{HL}=2(t-m))\nonumber\\
  =&\frac{c^{4t}}{2}\sum_{j_1=1}^m \sum_{j_2=1}^m \left(-\frac{s^2}{c^2}\right)^{j_1+j_2} {m-1\choose j_1-1}{m-1\choose j_2-1}{2t-m\choose j_1-1}{2t-m\choose j_2-1}\nonumber\\
  &\times\left\{\frac{m^2+(2t+1-m)^2-(j_1+j_2)(2t+1)}{j_1j_2}+\frac{2}{s^2}\right\}\quad (m=1,2,\ldots, t),\\[2mm]
  &\mathbb{P}(X_{2t+1}^{HL}=2t+1)=\mathbb{P}(X_{2t+1}^{HL}=2t)=\frac{c^{4t}}{2}.
 \end{align}
\end{thm}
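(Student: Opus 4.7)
The plan is to obtain Theorem \ref{th:151209_6} directly from Lemma \ref{lem:151209_7p} by adding the two inner-state probabilities, since Eq. \eqref{eq:HL_probability} says
\begin{equation*}
\mathbb{P}(X_t^{HL}=x) = \mathbb{P}(X_t^{HL}=x;0) + \mathbb{P}(X_t^{HL}=x;1).
\end{equation*}
So the whole proof reduces to a termwise simplification of the two explicit formulas already given in Lemma \ref{lem:151209_7p}.

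First I would handle the even-time interior case $x=2(t-m)$ with $m=1,2,\ldots,t-1$. The two expressions from Lemma \ref{lem:151209_7p} share the same common factor
\begin{equation*}
\frac{c^{2(2t-1)}}{2}\left(-\frac{s^2}{c^2}\right)^{j_1+j_2}\binom{m-1}{j_1-1}\binom{m-1}{j_2-1}\binom{2t-m-1}{j_1-1}\binom{2t-m-1}{j_2-1},
\end{equation*}
so only the curly-bracket factor needs to be added. Adding
\begin{equation*}
\frac{(m-j_1-j_2)m}{j_1j_2}+\frac{1}{s^2}\quad \text{and}\quad \frac{(2t-m-j_1-j_2)(2t-m)}{j_1j_2}+\frac{1}{s^2}
\end{equation*}
and noting that $m + (2t-m) = 2t$ collapses the linear pieces gives
\begin{equation*}
\frac{m^2 + (2t-m)^2 - 2t(j_1+j_2)}{j_1j_2}+\frac{2}{s^2},
\end{equation*}
which is precisely the bracket appearing in the theorem. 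The same step with $2t$ replaced by $2t+1$ settles the odd-time case $x=2(t-m)+1$, since $m+(2t+1-m)=2t+1$ gives the factor $(j_1+j_2)(2t+1)$ in the numerator.

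The boundary cases are essentially automatic. At $x=0$ (even $t$) Lemma \ref{lem:151209_7p} states $\mathbb{P}(X_{2t}^{HL}=0;0)=\mathbb{P}(X_{2t}^{HL}=0;1)$, so the sum merely doubles the common value, producing the leading factor $c^{2(2t-1)}$ in the theorem. At the two extreme positions $x=2t,\,2t-1$ for even $t$, and $x=2t+1,\,2t$ for odd $t$, only the inner state $1$ carries weight (the other component vanishes in Lemma \ref{lem:151209_7p}), so $\mathbb{P}(X_t^{HL}=x)$ equals the single nonzero contribution $c^{2(2t-1)}/2$ or $c^{4t}/2$ respectively.

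The calculation is entirely mechanical; there is no real obstacle beyond keeping the index ranges aligned and making sure the boundary values from Lemma \ref{lem:151209_7p} are not accidentally double-counted. The only algebraic identity that is doing any work is $m+(2t-m)=2t$ (and its odd-time analogue $m+(2t+1-m)=2t+1$), which is what combines the two distinct linear expressions in $j_1+j_2$ into the single symmetric expression that the theorem displays.
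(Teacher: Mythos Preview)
Your proposal is correct and follows exactly the route the paper indicates: the paper itself provides no detailed proof of Theorem~\ref{th:151209_6}, merely stating that it is obtained from Lemma~\ref{lem:151209_7p}, and your argument supplies precisely that computation by summing the two inner-state probabilities via Eq.~\eqref{eq:HL_probability}. The algebraic simplification you carry out (combining the two curly-bracket terms through $m+(2t-m)=2t$ and its odd-time analogue) and your handling of the boundary positions are both accurate.
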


As shown in Figs. \ref{fig:160201_13}--\ref{fig:160201_22}, Lemma \ref{lem:151209_7p} and Theorem \ref{th:151209_6} are in complete agreement with numerical experiments performed according to Eq. \eqref{eq:L_time-evolution}.
The computation of the blue bars are based on Eq. \eqref{eq:L_time-evolution} and the red filled circles are estimated by the representations in Lemma \ref{lem:151209_7p} and Theorem \ref{th:151209_6}. 

\begin{figure}[h]
\begin{center}
 \begin{minipage}{35mm}
  \begin{center}
   \includegraphics[scale=0.3]{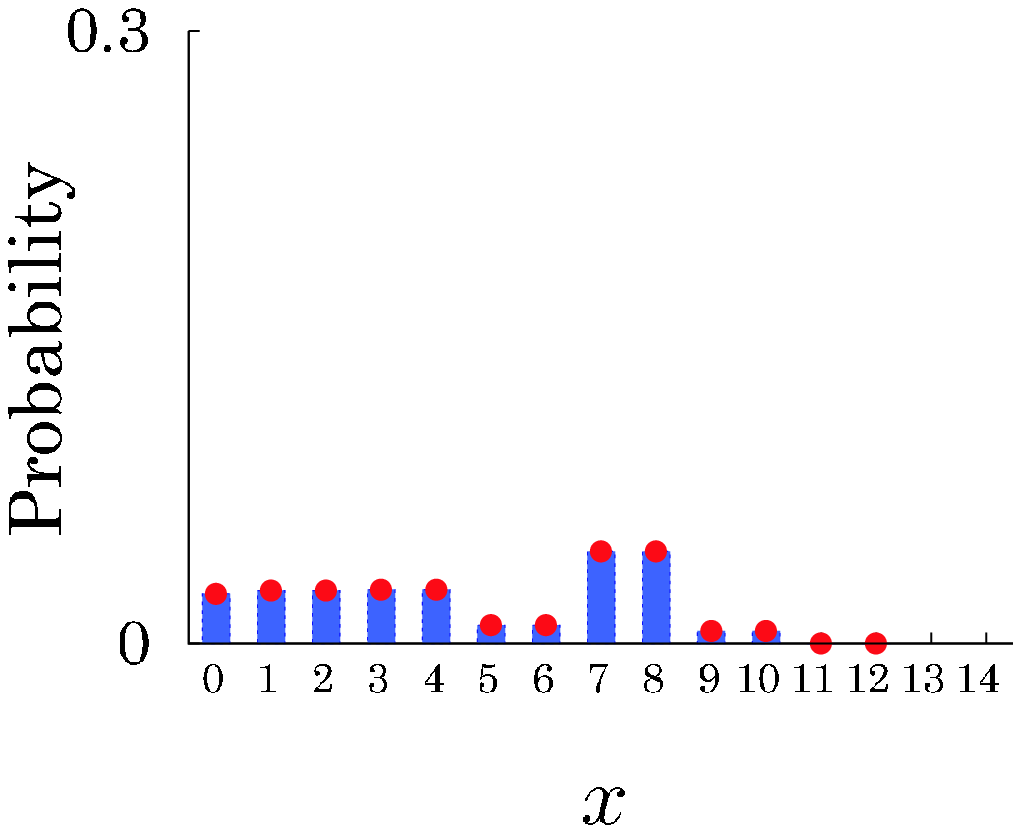}\\[2mm]
  (a) $\mathbb{P}(X_{14}^{HL}=x;0)$
  \end{center}
 \end{minipage}
 \begin{minipage}{35mm}
  \begin{center}
   \includegraphics[scale=0.3]{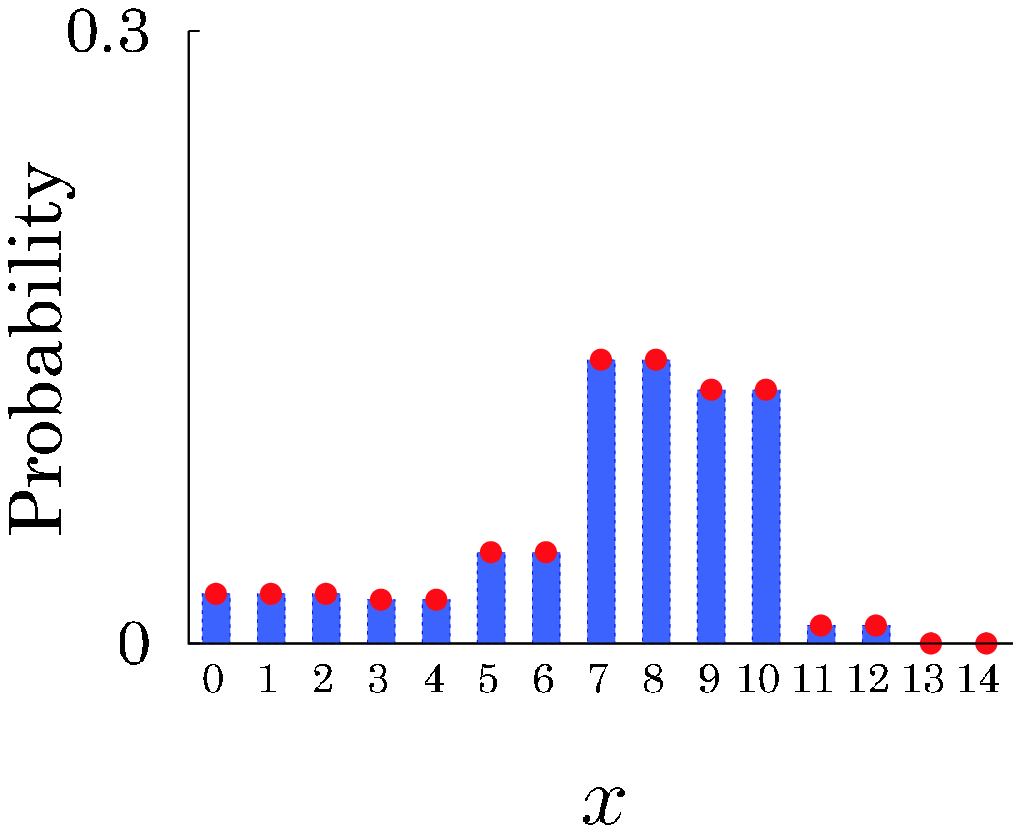}\\[2mm]
  (b) $\mathbb{P}(X_{14}^{HL}=x;1)$
  \end{center}
 \end{minipage}
 \begin{minipage}{35mm}
  \begin{center}
   \includegraphics[scale=0.3]{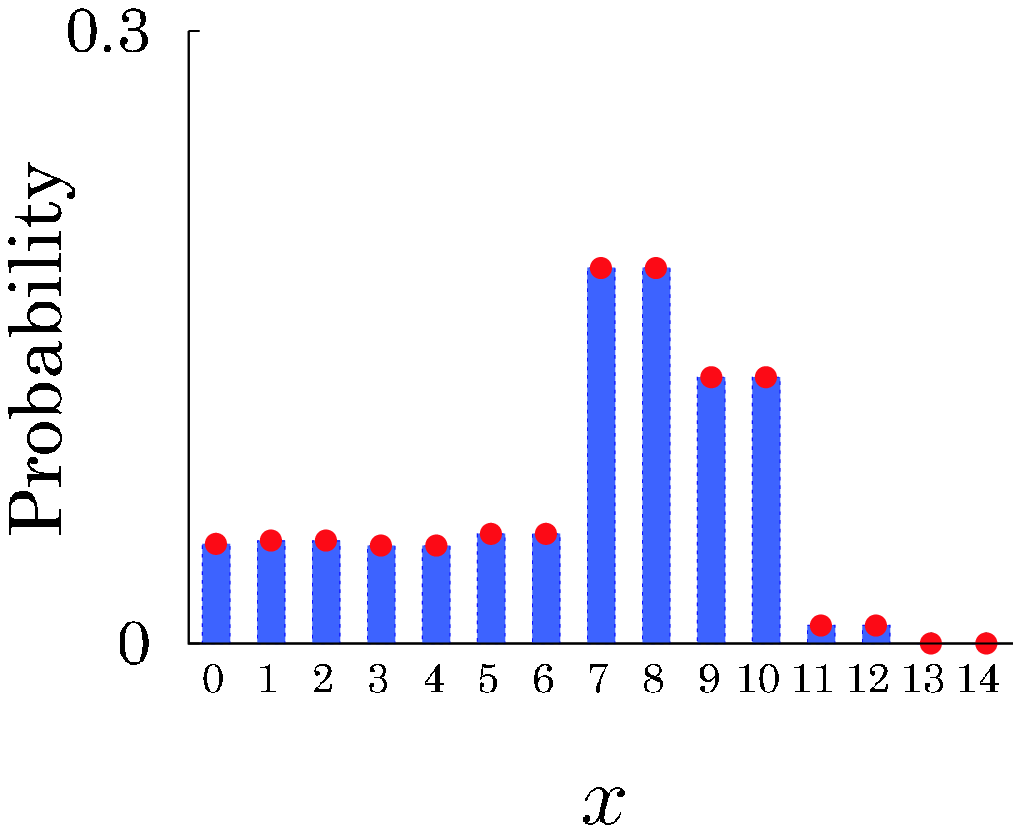}\\[2mm]
  (c) $\mathbb{P}(X_{14}^{HL}=x)$
  \end{center}
 \end{minipage}
\vspace{5mm}
\caption{$\theta=\pi/4$ : The results of the computation according to Eq. \eqref{eq:L_time-evolution} are represented by the blue bars. The red filled circles are estimated by Lemma \ref{lem:151209_7p} and Theorem \ref{th:151209_6} as $t=14$ which is an even number.}
\label{fig:160201_13}
\end{center}
\end{figure}

\begin{figure}[h]
\begin{center}
 \begin{minipage}{35mm}
  \begin{center}
   \includegraphics[scale=0.3]{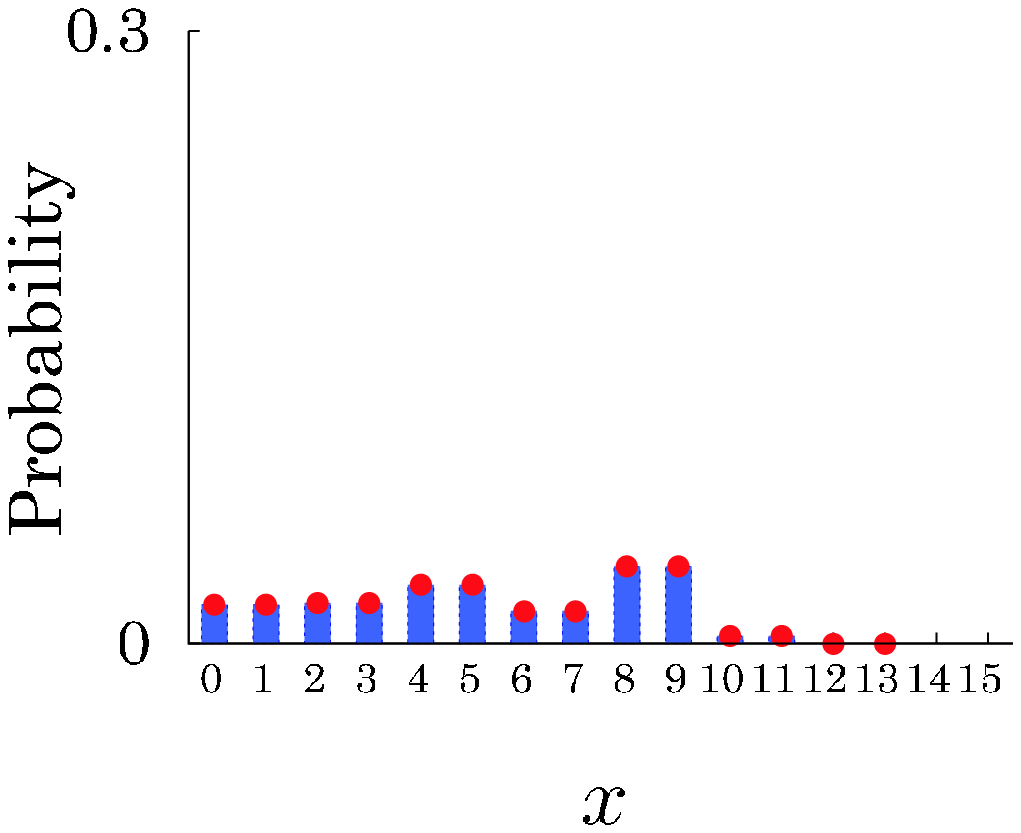}\\[2mm]
  (a) $\mathbb{P}(X_{15}^{HL}=x;0)$
  \end{center}
 \end{minipage}
 \begin{minipage}{35mm}
  \begin{center}
   \includegraphics[scale=0.3]{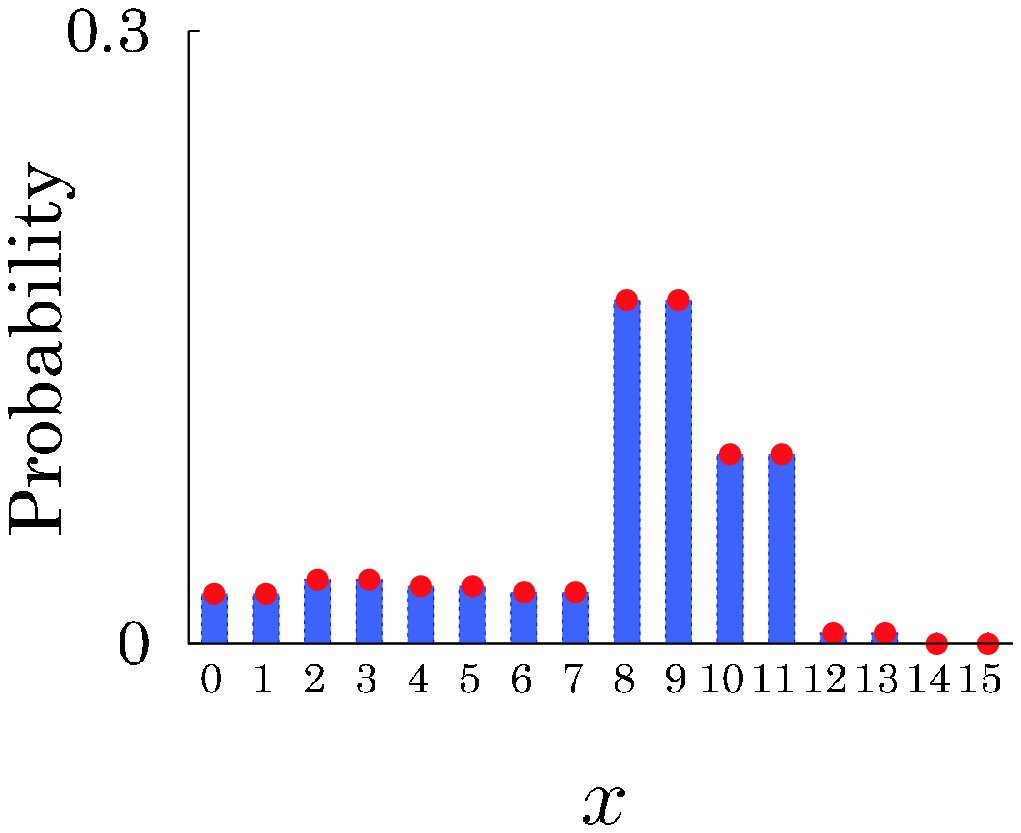}\\[2mm]
  (b) $\mathbb{P}(X_{15}^{HL}=x;1)$
  \end{center}
 \end{minipage}
 \begin{minipage}{35mm}
  \begin{center}
   \includegraphics[scale=0.3]{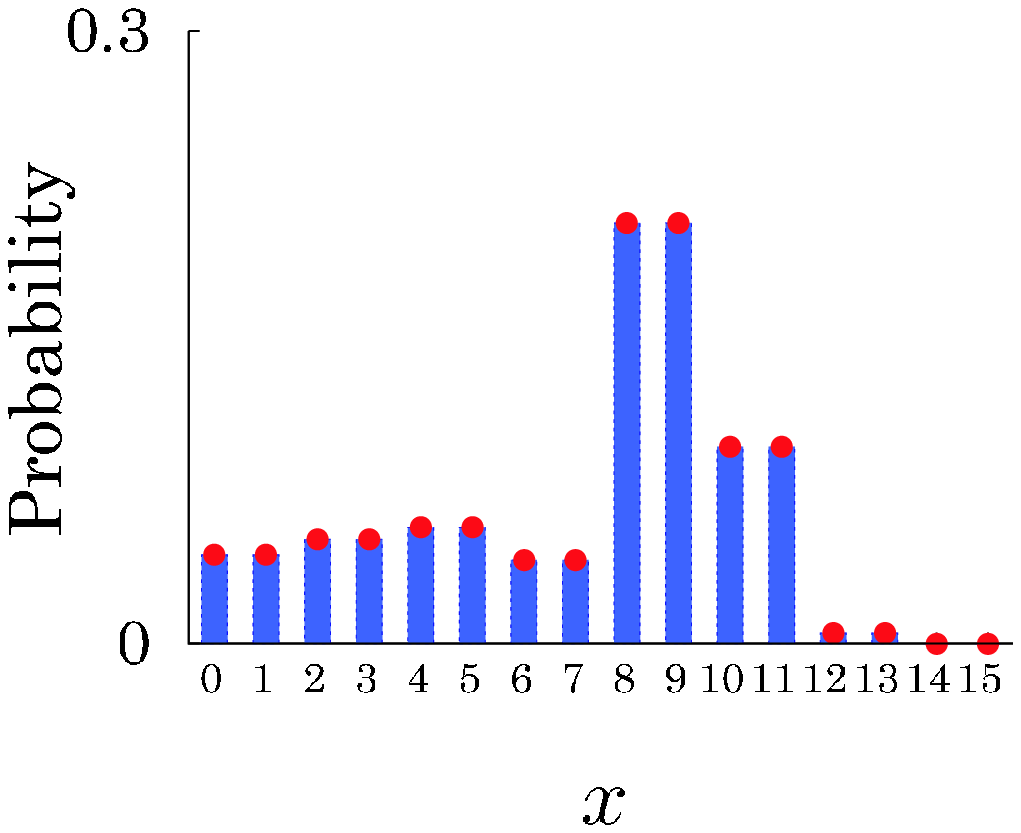}\\[2mm]
  (c) $\mathbb{P}(X_{15}^{HL}=x)$
  \end{center}
 \end{minipage}
\vspace{5mm}
\caption{$\theta=\pi/4$ : The results of the computation according to Eq. \eqref{eq:L_time-evolution} are represented by the blue bars. The red filled circles are estimated by Lemma \ref{lem:151209_7p} and Theorem \ref{th:151209_6} as $t=15$ which is an odd number.}
\label{fig:160201_16}
\end{center}
\end{figure}

\begin{figure}[h]
\begin{center}
 \begin{minipage}{35mm}
  \begin{center}
   \includegraphics[scale=0.3]{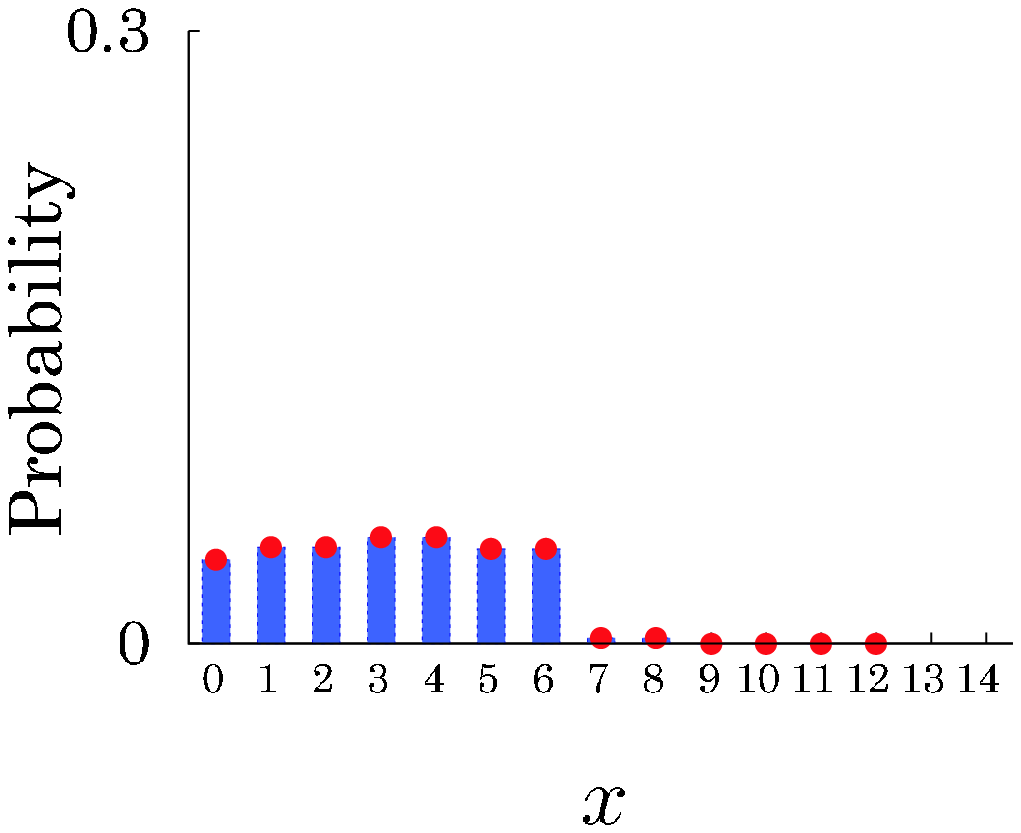}\\[2mm]
  (a) $\mathbb{P}(X_{14}^{HL}=x;0)$
  \end{center}
 \end{minipage}
 \begin{minipage}{35mm}
  \begin{center}
   \includegraphics[scale=0.3]{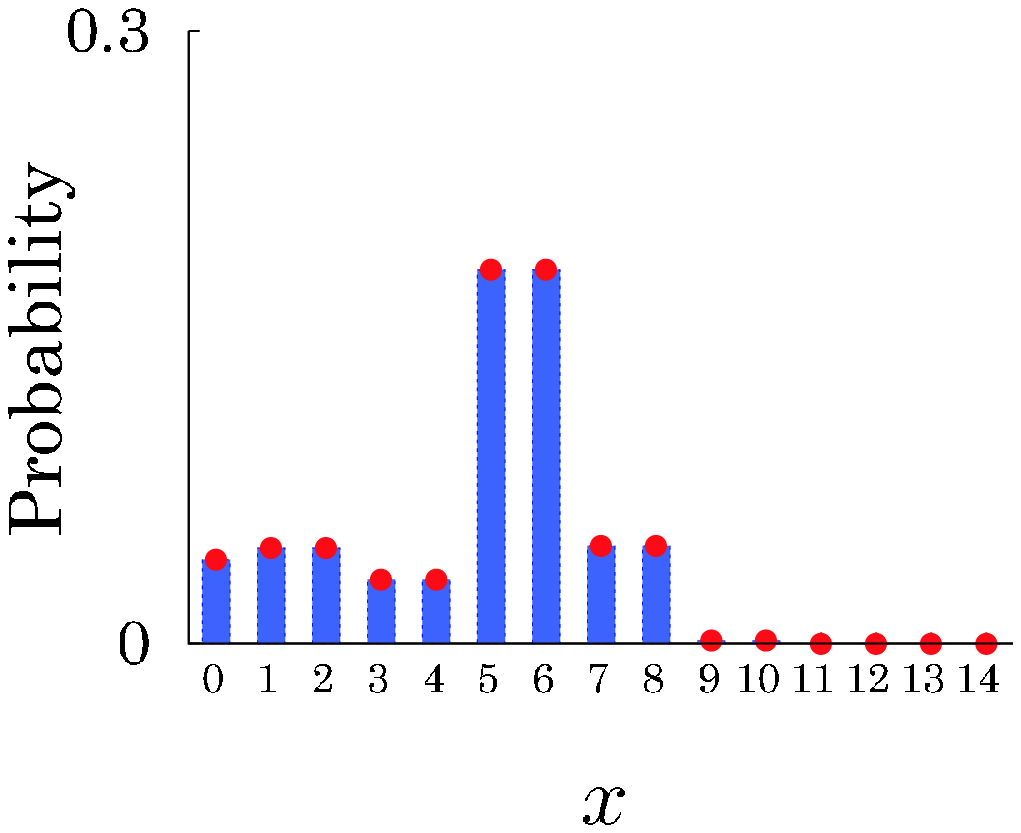}\\[2mm]
  (b) $\mathbb{P}(X_{14}^{HL}=x;1)$
  \end{center}
 \end{minipage}
 \begin{minipage}{35mm}
  \begin{center}
   \includegraphics[scale=0.3]{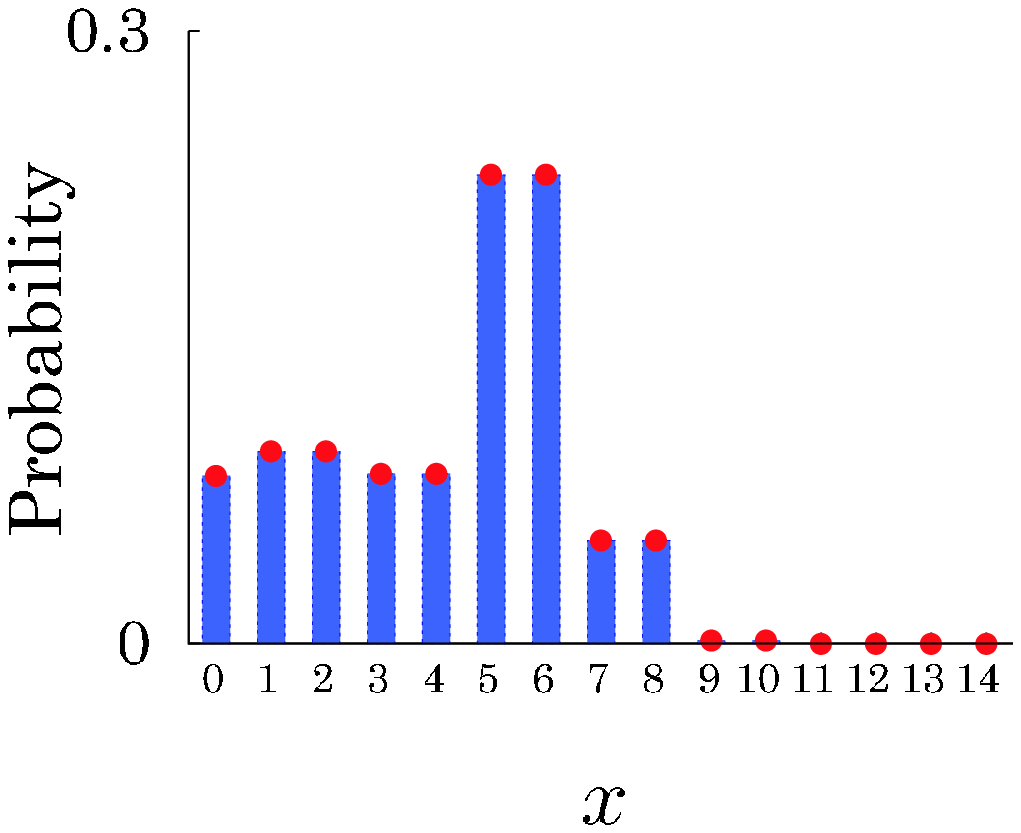}\\[2mm]
  (c) $\mathbb{P}(X_{14}^{HL}=x)$
  \end{center}
 \end{minipage}
\vspace{5mm}
\caption{$\theta=\pi/3$ : The results of the computation according to Eq. \eqref{eq:L_time-evolution} are represented by the blue bars. The red filled circles are estimated by Lemma \ref{lem:151209_7p} and Theorem \ref{th:151209_6} as $t=14$ which is an even number.}
\label{fig:160201_19}
\end{center}
\end{figure}

\begin{figure}[h]
\begin{center}
 \begin{minipage}{35mm}
  \begin{center}
   \includegraphics[scale=0.3]{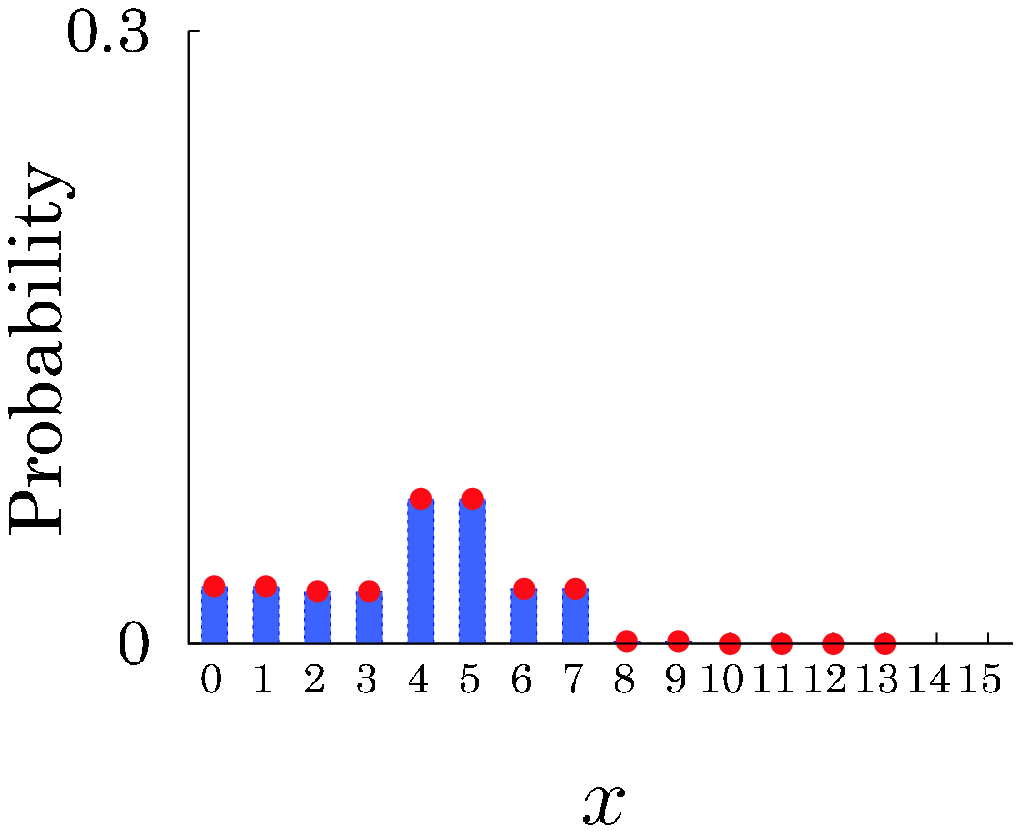}\\[2mm]
  (a) $\mathbb{P}(X_{15}^{HL}=x;0)$
  \end{center}
 \end{minipage}
 \begin{minipage}{35mm}
  \begin{center}
   \includegraphics[scale=0.3]{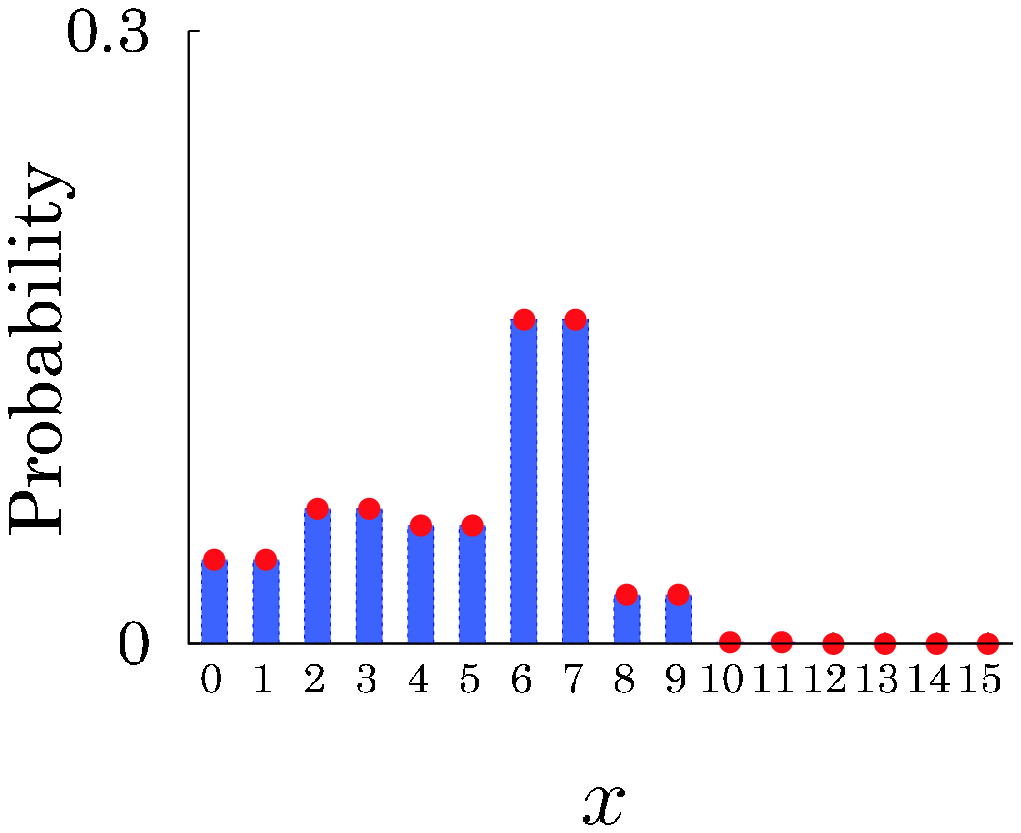}\\[2mm]
  (b) $\mathbb{P}(X_{15}^{HL}=x;1)$
  \end{center}
 \end{minipage}
 \begin{minipage}{35mm}
  \begin{center}
   \includegraphics[scale=0.3]{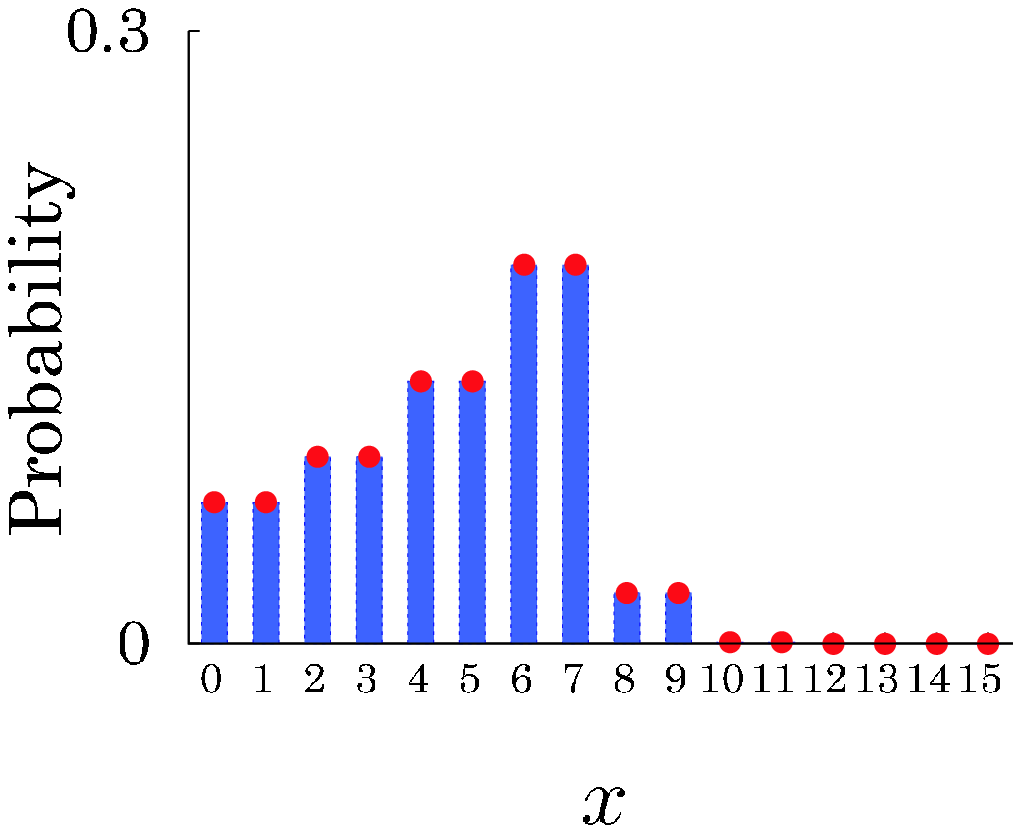}\\[2mm]
  (c) $\mathbb{P}(X_{15}^{HL}=x)$
  \end{center}
 \end{minipage}
\vspace{5mm}
\caption{$\theta=\pi/3$ : The results of the computation according to Eq. \eqref{eq:L_time-evolution} are represented by the blue bars. The red filled circles are estimated by Lemma \ref{lem:151209_7p} and Theorem \ref{th:151209_6} as $t=15$ which is an odd number.}
\label{fig:160201_22}
\end{center}
\end{figure}

Thanks to Lemma \ref{lem:151209_7p} and Theorem \ref{th:151209_6}, the Fourier analysis manages to work on the computation of a limit theorem for the quantum walk on the half line.
Since we know the limit distribution of the quantum walk on the line which was defined in Sec. \ref{sec:L_QW}, Eq. \eqref{eq:L_Fourier_analysis} tells us the following limit distributions as $t\to\infty$.

\begin{thm}
\label{th:151212}
 Assume that $\theta\neq 0,\pi/2,\pi,3\pi/2$. For a real number $x$, we have
 \begin{align}
  \lim_{t\to\infty}\mathbb{P}\left(\frac{X_t^{HL}}{t}\leq x;0\right)=&\int_{-\infty}^x \frac{|s|}{\pi(1+y)\sqrt{c^2-y^2}}I_{[0,|c|)}(y)\,dy,\label{eq:151212_7-1}\\
  \lim_{t\to\infty}\mathbb{P}\left(\frac{X_t^{HL}}{t}\leq x;1\right)=&\int_{-\infty}^x \frac{|s|}{\pi(1-y)\sqrt{c^2-y^2}}I_{[0,|c|)}(y)\,dy,\label{eq:151212_7-2}\\
  \lim_{t\to\infty}\mathbb{P}\left(\frac{X_t^{HL}}{t}\leq x\right)=&\int_{-\infty}^x \frac{2|s|}{\pi(1-y^2)\sqrt{c^2-y^2}}I_{[0,|c|)}(y)\,dy.\label{eq:151212_4}
 \end{align}
\end{thm}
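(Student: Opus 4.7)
The plan is to reduce Theorem \ref{th:151212} entirely to the line-walk limit law \eqref{eq:L_Fourier_analysis} using the exact transfer of probabilities supplied by Theorem \ref{th:151208_18}. For $x\geq 0$, the identity $\mathbb{P}(X_t^{HL}=k;0)=\mathbb{P}(Y_t^L=k)$ rewrites
\[
\mathbb{P}\left(\frac{X_t^{HL}}{t}\leq x;0\right)=\sum_{k=0}^{\lfloor tx\rfloor}\mathbb{P}(Y_t^L=k)=\mathbb{P}\left(0\leq \frac{Y_t^L}{t}\leq \frac{\lfloor tx\rfloor}{t}\right),
\]
and similarly $\mathbb{P}(X_t^{HL}=k;1)=\mathbb{P}(Y_t^L=-k-1)$ gives
\[
\mathbb{P}\left(\frac{X_t^{HL}}{t}\leq x;1\right)=\mathbb{P}\left(-\frac{\lfloor tx\rfloor+1}{t}\leq \frac{Y_t^L}{t}\leq -\frac{1}{t}\right).
\]
For $x<0$ both left-hand sides vanish, which matches the right-hand sides of \eqref{eq:151212_7-1}--\eqref{eq:151212_4} since $I_{[0,|c|)}$ kills the integrand there, so only $x\geq 0$ needs attention.

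Next, \eqref{eq:L_Fourier_analysis} shows that $Y_t^L/t$ converges in distribution to the absolutely continuous law with density $f(y)=\frac{|s|}{\pi(1+y)\sqrt{c^2-y^2}}I_{(-|c|,|c|)}(y)$, whose distribution function is continuous everywhere. Consequently the probability of any interval whose endpoints converge also converges to the integral of $f$ over the limiting interval, and applying this to the two displays above yields
\[
\lim_{t\to\infty}\mathbb{P}\left(\frac{X_t^{HL}}{t}\leq x;0\right)=\int_0^{\min(x,|c|)}f(y)\,dy,
\]
\[
\lim_{t\to\infty}\mathbb{P}\left(\frac{X_t^{HL}}{t}\leq x;1\right)=\int_{-\min(x,|c|)}^{0}f(y)\,dy.
\]
The first integral is exactly the right-hand side of \eqref{eq:151212_7-1}. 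For the second, the substitution $y=-u$ turns $f(-u)$ into $\frac{|s|}{\pi(1-u)\sqrt{c^2-u^2}}$ on $[0,|c|)$, producing \eqref{eq:151212_7-2}.

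Finally, \eqref{eq:151212_4} follows by summing the two previous limits and using
\[
\frac{|s|}{\pi(1+y)\sqrt{c^2-y^2}}+\frac{|s|}{\pi(1-y)\sqrt{c^2-y^2}}=\frac{2|s|}{\pi(1-y^2)\sqrt{c^2-y^2}}
\]
on $[0,|c|)$. The only delicate point is the interchange $\mathbb{P}(a_t\leq Y_t^L/t\leq b_t)\to\int_a^b f(y)\,dy$ when $a_t\to a$ and $b_t\to b$, but this is automatic from \eqref{eq:L_Fourier_analysis} because the limiting CDF is continuous. Apart from this routine step and the bookkeeping between the regimes $x\in[0,|c|)$ and $x\geq|c|$, I do not expect any substantive obstacle: the substance of the theorem has already been absorbed into Theorem \ref{th:151208_18} and Eq. \eqref{eq:L_Fourier_analysis}.
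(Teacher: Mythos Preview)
Your proposal is correct and follows essentially the same route as the paper: reduce to the line walk via Theorem~\ref{th:151208_18}, invoke the limit law~\eqref{eq:L_Fourier_analysis}, substitute $y\mapsto -y$ for the $j=1$ case, and sum for~\eqref{eq:151212_4}. Your explicit remark that the limiting CDF is continuous (so moving endpoints cause no trouble) is a slight addition in rigor over the paper's version, but the argument is otherwise the same.
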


\begin{proof}{%
 For a non-negative real number $x$, we derive Eqs. \eqref{eq:151212_7-1} and \eqref{eq:151212_7-2} from the result of the Fourier analysis in Eq. \eqref{eq:L_Fourier_analysis},
 \begin{align}
  &\lim_{t\to\infty}\mathbb{P}\left(\frac{X_t^{HL}}{t}\leq x;0\right)
  =\lim_{t\to\infty}\mathbb{P}\left(0\leq \frac{Y_t^L}{t}\leq x\right)\nonumber\\
  =&\int_0^x \frac{|s|}{\pi(1+y)\sqrt{c^2-y^2}}I_{(-|c|,|c|)}(y)\,dy
  =\int_0^x \frac{|s|}{\pi(1+y)\sqrt{c^2-y^2}}I_{[0,|c|)}(y)\,dy\nonumber\\
  =&\int_{-\infty}^x \frac{|s|}{\pi(1+y)\sqrt{c^2-y^2}}I_{[0,|c|)}(y)\,dy,
 \end{align}
 \begin{align}
  &\lim_{t\to\infty}\mathbb{P}\left(\frac{X_t^{HL}}{t}\leq x;1\right)
  =\lim_{t\to\infty}\mathbb{P}\left(-x\leq \frac{Y_t^L}{t}<0\right)\nonumber\\
  =&\int_{-x}^0 \frac{|s|}{\pi(1+y)\sqrt{c^2-y^2}}I_{(-|c|,|c|)}(y)\,dy\nonumber\\
  =&\int_0^x \frac{|s|}{\pi(1-y)\sqrt{c^2-y^2}}I_{(-|c|,|c|)}(-y)\,dy\nonumber\\
  =&\int_0^x \frac{|s|}{\pi(1-y)\sqrt{c^2-y^2}}I_{(-|c|,|c|)}(y)\,dy
  =\int_0^x \frac{|s|}{\pi(1-y)\sqrt{c^2-y^2}}I_{[0,|c|)}(y)\,dy\nonumber\\
  =&\int_{-\infty}^x \frac{|s|}{\pi(1-y)\sqrt{c^2-y^2}}I_{[0,|c|)}(y)\,dy.
 \end{align}
 On the other hand, for a negative real number $x$, these limits can be expressed as the same integral representations, 
 \begin{align}
  &\lim_{t\to\infty}\mathbb{P}\left(\frac{X_t^{HL}}{t}\leq x;0\right)=0
  =\int_{-\infty}^x \frac{|s|}{\pi(1+y)\sqrt{c^2-y^2}}I_{[0,|c|)}(y)\,dy,\\
  &\lim_{t\to\infty}\mathbb{P}\left(\frac{X_t^{HL}}{t}\leq x;1\right)=0
  =\int_{-\infty}^x \frac{|s|}{\pi(1-y)\sqrt{c^2-y^2}}I_{[0,|c|)}(y)\,dy.
 \end{align}
 We, therefore, figure out Eqs. \eqref{eq:151212_7-1} and \eqref{eq:151212_7-2} for any real number $x$.
 Adding up Eqs. \eqref{eq:151212_7-1} and \eqref{eq:151212_7-2}, one can obtain a limit theorem for the probability distribution $\mathbb{P}(X_t^{HL}=x)$, that is, Eq. \eqref{eq:151212_4}.
\qed
}
\end{proof}
\bigskip

When we assign the value $\pi/4$ to the parameter $\theta$, Eq. \eqref{eq:151212_4} outputs the limit distribution
\begin{equation}
 \lim_{t\to\infty}\mathbb{P}\left(\frac{X_t^{HL}}{t}\leq x\right)=\int_{-\infty}^x \frac{2}{\pi(1-y^2)\sqrt{1-2y^2}}I_{[0,1/\sqrt{2}\,)}(y)\,dy.
\end{equation}
This is consistent with a past study in Ref.~\cite{LiuPetulante2013} (See Corollary 2 in the paper).  
Theorem \ref{th:151212} allows us to hold the following approximations as time $t$ goes enough up,
\begin{align}
 \mathbb{P}(X_t^{HL}=x;0)\sim & \left\{\begin{array}{cl}
			       \frac{|s|t}{\pi (t+x)\sqrt{c^2t^2-x^2}} & (0 \leq x < |c|\,t ),\\[2mm]
				0 & (\mbox{otherwise}).
				     \end{array}\right.\label{eq:approximation_0}\\
 \mathbb{P}(X_t^{HL}=x;1)\sim & \left\{\begin{array}{cl}
			       \frac{|s|t}{\pi (t-x)\sqrt{c^2t^2-x^2}} & (0 \leq x < |c|\,t ),\\[2mm]
				0 & (\mbox{otherwise}).
				     \end{array}\right.\label{eq:approximation_1}\\
 \mathbb{P}(X_t^{HL}=x)\sim & \left\{\begin{array}{cl}
			       \frac{2|s|t^2}{\pi (t^2-x^2)\sqrt{c^2t^2-x^2}} & (0 \leq x < |c|\,t ),\\[2mm]
				0 & (\mbox{otherwise}).
				     \end{array}\right.\label{eq:approximation}
\end{align}
We should recall that the variable $x$ takes a non-negative integer, not a real number, in Eqs. \eqref{eq:approximation_0}, \eqref{eq:approximation_1}, and \eqref{eq:approximation}, which means $x\in\left\{0,1,2,\ldots\right\}$.
Figures \ref{fig:160201_25} and \ref{fig:160201_28} show the difference between the probability distribution and its approximation.
The blue lines form the probability distribution at time 500 by numerics and the red points indicate asymptotic values obtained by the usage of the right-sides in Eqs. \eqref{eq:approximation_0}, \eqref{eq:approximation_1}, and \eqref{eq:approximation} as $t=500$.

\begin{figure}[h]
\begin{center}
 \begin{minipage}{35mm}
  \begin{center}
   \includegraphics[scale=0.3]{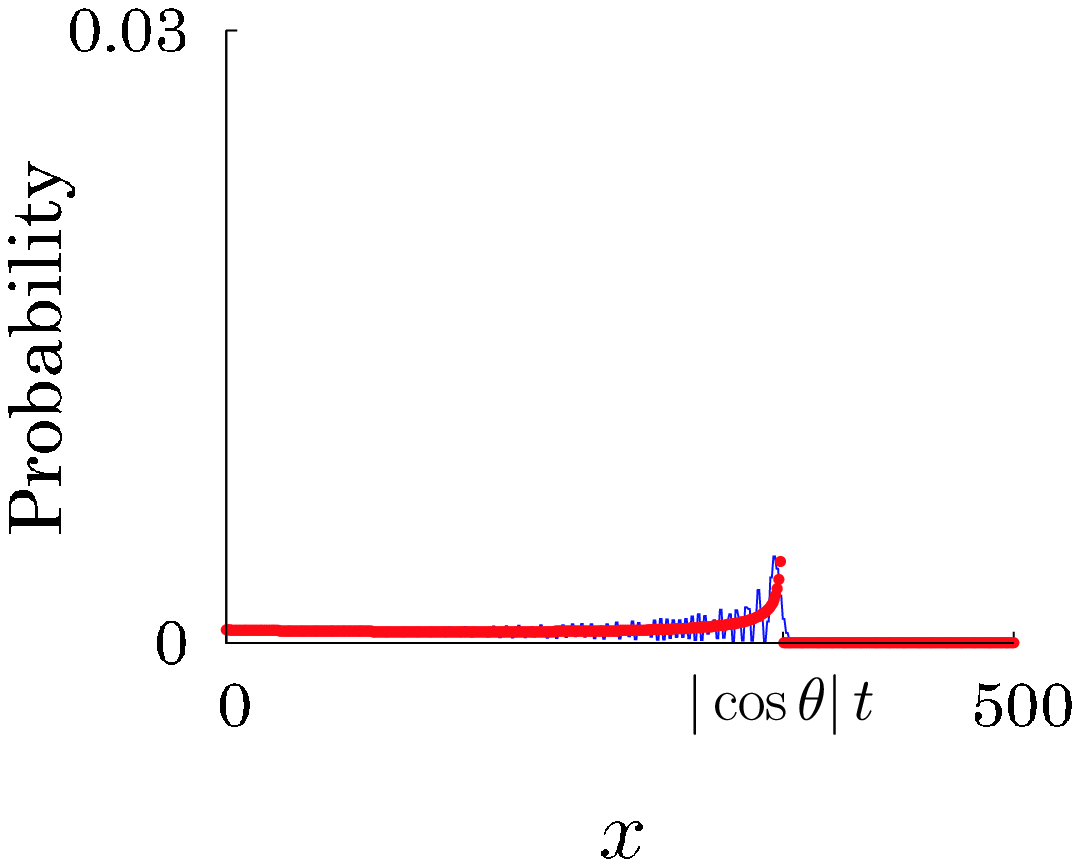}\\[2mm]
  (a) $\mathbb{P}(X_{500}^{HL}=x;0)$
  \end{center}
 \end{minipage}
 \begin{minipage}{35mm}
  \begin{center}
   \includegraphics[scale=0.3]{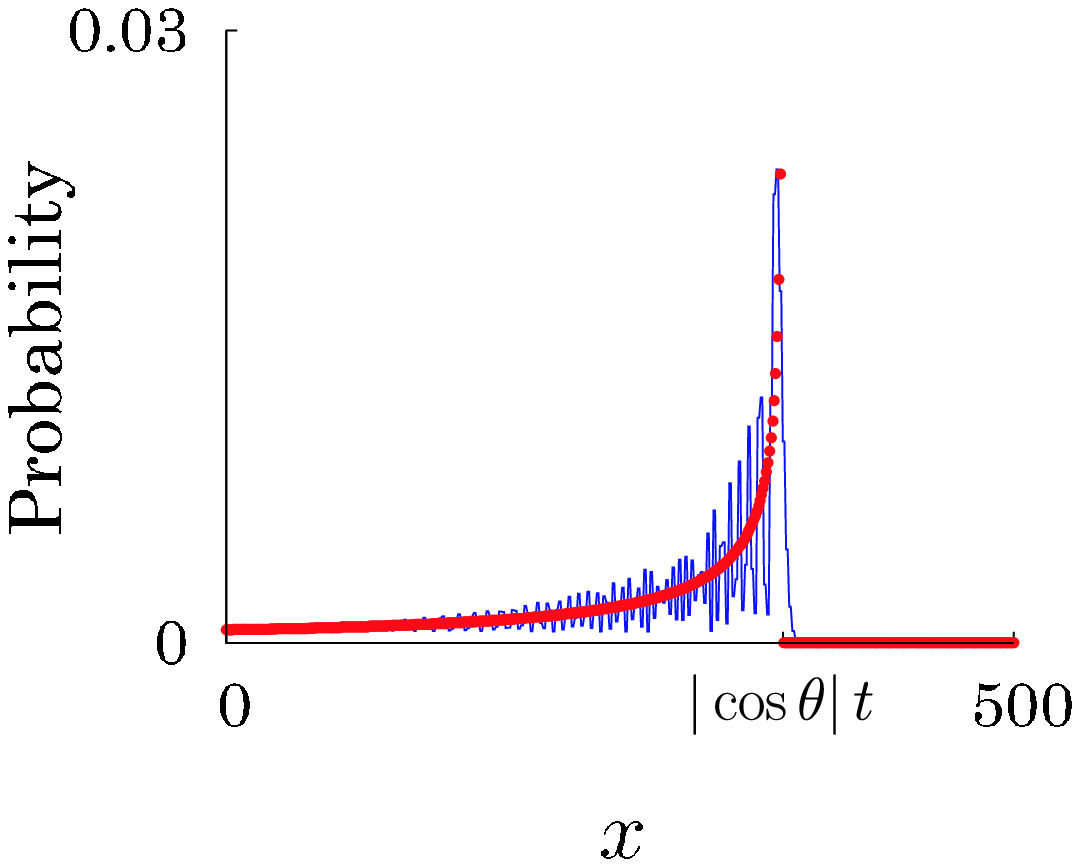}\\[2mm]
  (b) $\mathbb{P}(X_{500}^{HL}=x;1)$
  \end{center}
 \end{minipage}
 \begin{minipage}{35mm}
  \begin{center}
   \includegraphics[scale=0.3]{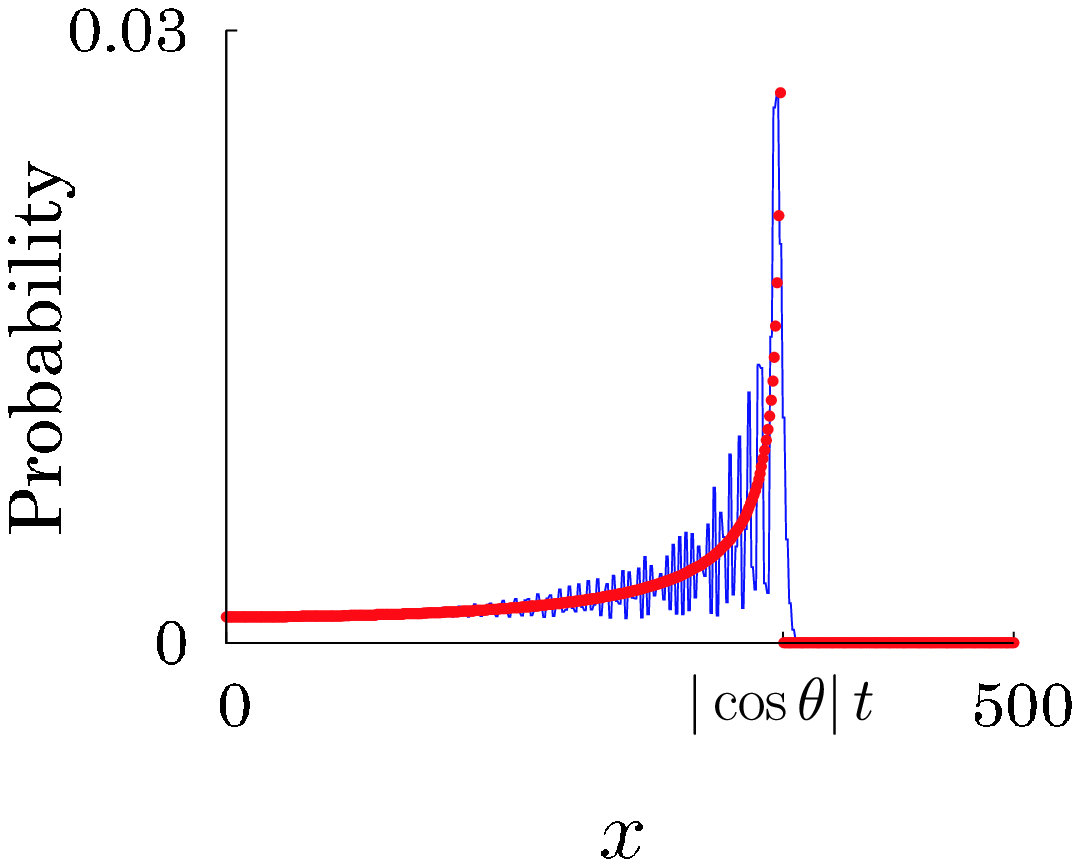}\\[2mm]
  (c) $\mathbb{P}(X_{500}^{HL}=x)$
  \end{center}
 \end{minipage}
\vspace{5mm}
\caption{$\theta=\pi/4$ : The blue lines represent the probability distribution at time 500. The red points indicate values obtained by the approximations in Eqs. \eqref{eq:approximation_0}, \eqref{eq:approximation_1}, and \eqref{eq:approximation} as $t=500$.}
\label{fig:160201_25}
\end{center}
\end{figure}

\begin{figure}[h]
\begin{center}
 \begin{minipage}{35mm}
  \begin{center}
   \includegraphics[scale=0.3]{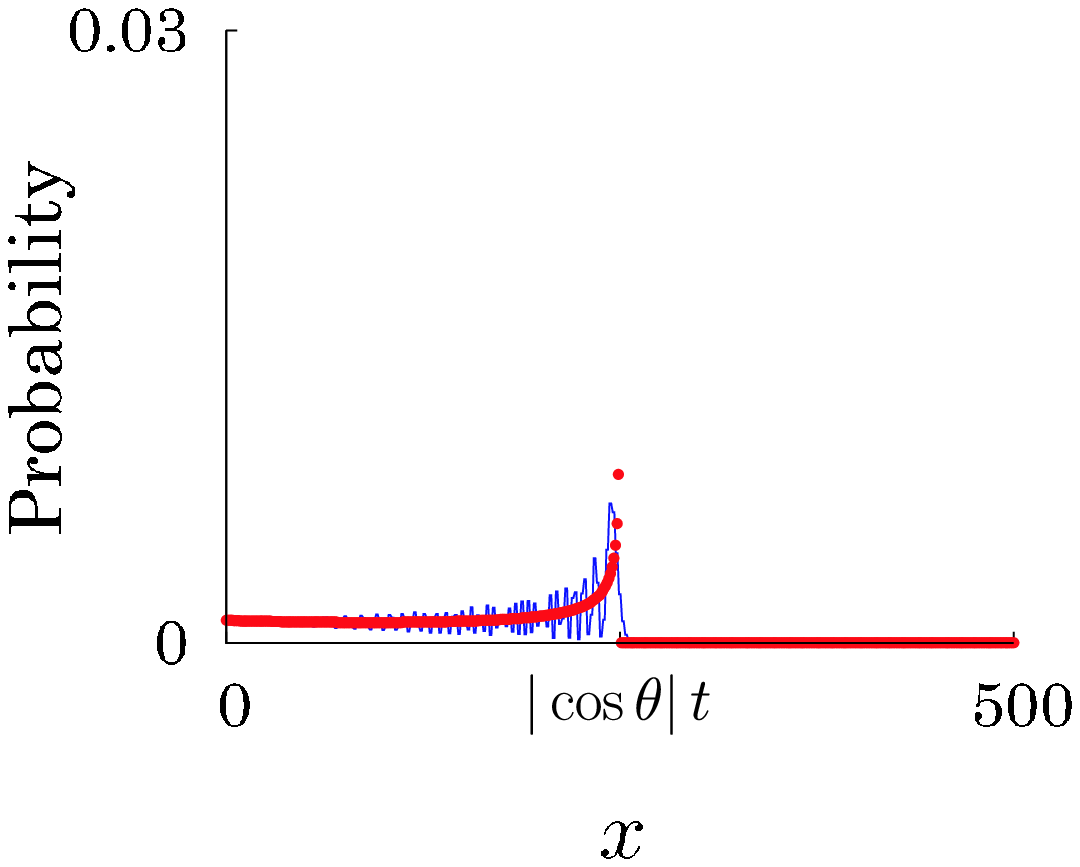}\\[2mm]
  (a) $\mathbb{P}(X_{500}^{HL}=x;0)$
  \end{center}
 \end{minipage}
 \begin{minipage}{35mm}
  \begin{center}
   \includegraphics[scale=0.3]{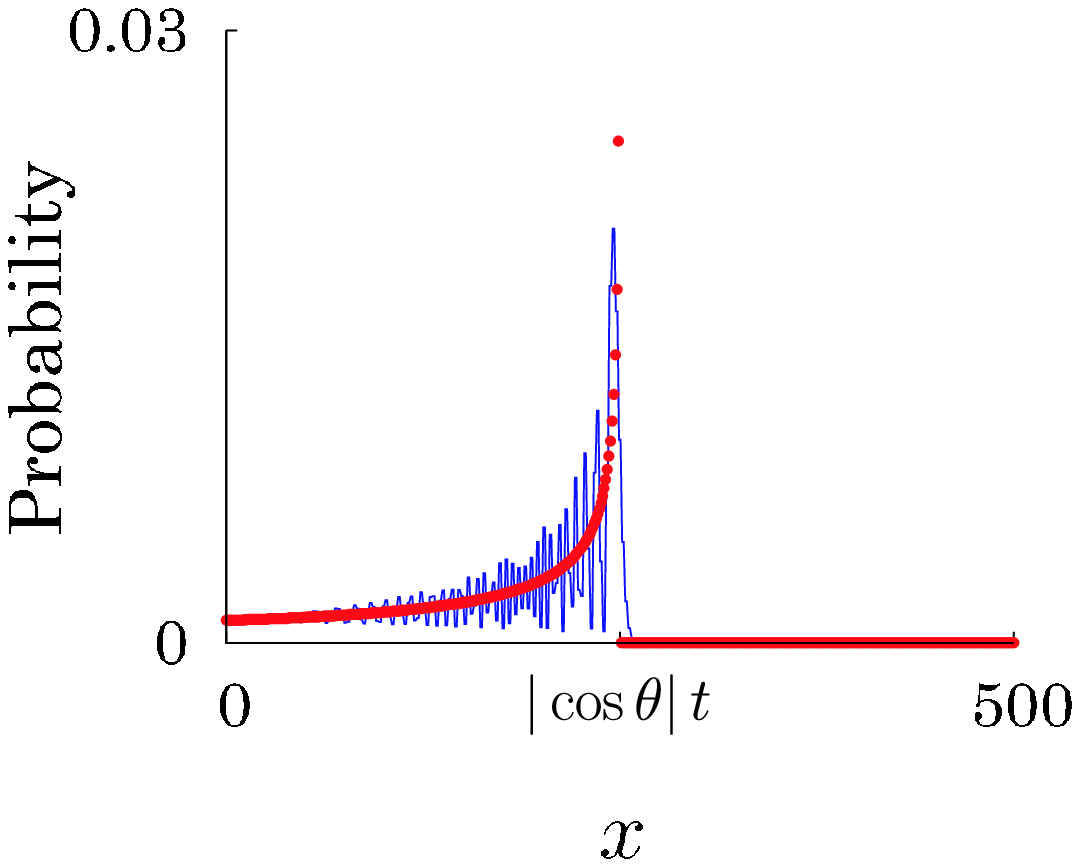}\\[2mm]
  (b) $\mathbb{P}(X_{500}^{HL}=x;1)$
  \end{center}
 \end{minipage}
 \begin{minipage}{35mm}
  \begin{center}
   \includegraphics[scale=0.3]{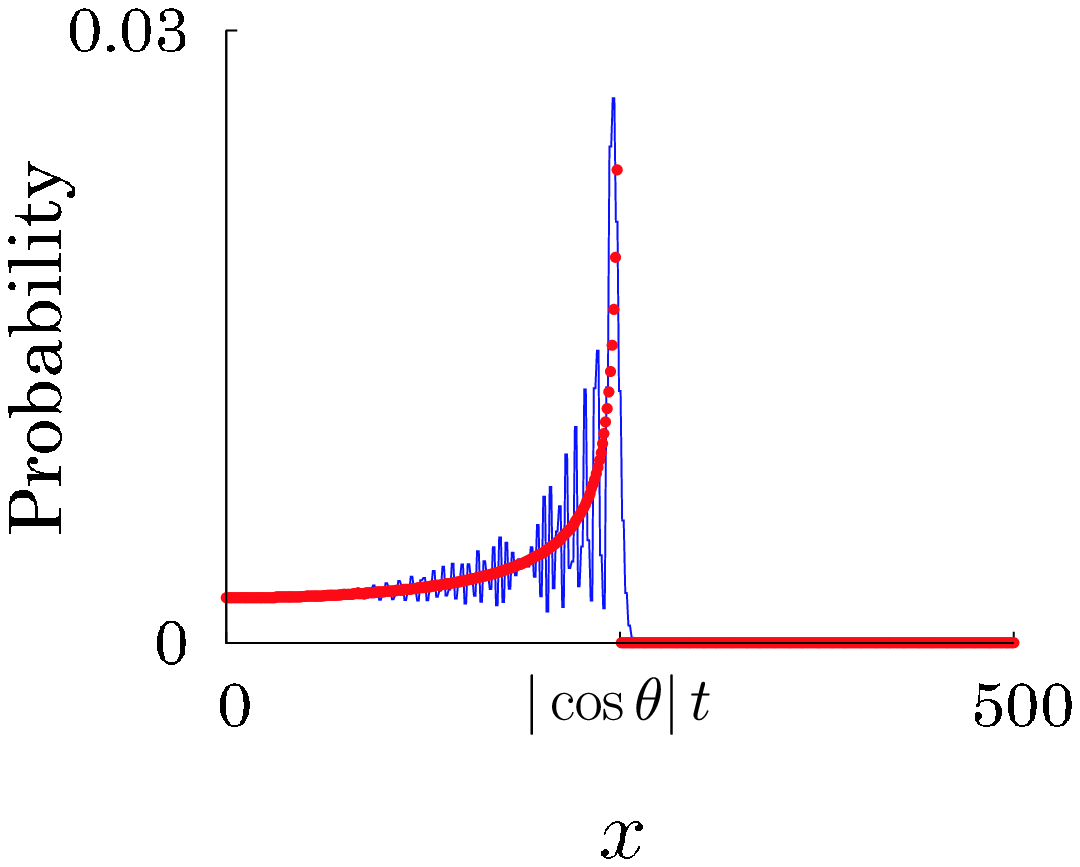}\\[2mm]
  (c) $\mathbb{P}(X_{500}^{HL}=x)$
  \end{center}
 \end{minipage}
\vspace{5mm}
\caption{$\theta=\pi/3$ : The blue lines represent the probability distribution at time 500. The red points indicate values obtained by the approximations in Eqs. \eqref{eq:approximation_0}, \eqref{eq:approximation_1}, and \eqref{eq:approximation} as $t=500$.}
\label{fig:160201_28}
\end{center}
\end{figure}

\section{Summary}
Motivated by the derivation of limit theorems for quantum walks on the half line by Fourier analysis, we took care of two quantum walks, a quantum walk on the half line and another one on the line.
Given a particular condition to the initial state of each quantum walk, the system of quantum walk on the line preserved all the information of the quantum walk on the half line, and vice versa.
As the result, it was figured out that the Fourier analysis became accessible to a limit theorem for the quantum walk on the half line by making the most of the copy onto the quantum walk on the line.
We can tell from Figs. \ref{fig:160201_25} and \ref{fig:160201_28} that the limit density function reproduces the probability distribution in approximation.
Also, we applied the result of a past study to the walk on the half line and got the exact representations of the probabilities $\mathbb{P}(X_t^{HL}=x;0)$, $\mathbb{P}(X_t^{HL}=x;1)$, and $\mathbb{P}(X_t^{HL}=x)$ in Lemma \ref{lem:151209_7p} and Theorem \ref{th:151209_6}.
They all completely agreed with the numerical experiments based on the time evolution rule of the quantum walk on the half line, as shown in Figs. \ref{fig:160201_13}--\ref{fig:160201_22}.
While the probabilities are exactly estimated by the representations, the approximations in Eqs. \eqref{eq:approximation_0}, \eqref{eq:approximation_1}, and \eqref{eq:approximation} are much easier to compute than the exact representations.
It, therefore, can be said that the limit distribution plays a role to inform roughly about the behavior of the quantum walk as time $t$ goes enough up.  

In this paper we had to go through a stage in which the quantum walk on the half line was copied onto the quantum walk on the line so that we proved Theorem \ref{th:151212} by Fourier analysis.
We, however, realized for sure a possibility that the Fourier analysis was available for the computation of limit theorems for quantum walks on the half line. 
Although the initial state was limited, the expansion should be done in future works.
As of now, it has been reported that some quantum walks on the half line can give rise to localization.
They were different from the quantum walk defined in this paper, but their limit distributions were derived by methods different from the Fourier analysis.
It is a challenging task to apply the Fourier analysis for such quantum walks.

\begin{center}
{\bf Acknowledgements}
\end{center}
The author is supported by JSPS Grant-in-Aid for Young Scientists (B) (No. 16K17648).


\end{document}